\DeclareSymbolFont{AMSb}{U}{msb}{m}{n}
\DeclareSymbolFontAlphabet{\mathbb}{AMSb}
\newcommand{\beqn}{\begin{eqnarray}}
\newcommand{\eeqn}{\end{eqnarray}}
\newcommand{\be}{\begin{equation}}
\newcommand{\ee}{\end{equation}}
\newcommand{\ba}{\begin{array}}
\newcommand{\ea}{\end{array}}
\newcommand{\bo}{{\hfill\loota}}
\newcommand{\loota}{\hbox{\enspace{\vrule height 7pt depth 0pt width 7pt}}}
\newcommand{\cD}{{\cal D}}
\newcommand{\cE}{{\cal E}}
\newcommand{\cF}{{\cal F}}
\newcommand{\cH}{{\cal H}}
\newcommand{\cM}{{\cal M}}
\newcommand{\cO}{{\cal O}}
\newcommand{\cS}{{\cal S}}
\newcommand{\cV}{{\cal V}}
\newcommand{\cW}{{\cal W}}
\newcommand{\cX}{{\cal X}}
\newcommand{\al}{\alpha}
\newcommand{\ci}{\cite}
\newcommand{\de}{\delta}
\newcommand{\De}{\Delta}
\newcommand{\ds}{\displaystyle}
\newcommand{\fr}{\frac}
\newcommand{\ga}{\gamma}
\newcommand{\la}{\label}
\newcommand{\lam}{\lambda}
\newcommand{\Lam}{\Lambda}
\newcommand{\na}{\nabla}
\newcommand{\om}{\omega}
\newcommand{\vp}{\varphi}
\newcommand{\ov}{\overline}
\newcommand{\pa}{\partial}
\newcommand{\re}{\ref}
\newcommand{\Si}{\Sigma}
\newcommand{\si}{\sigma}
\newcommand{\ti}{\tilde}
\newcommand{\dist}{\rm dist\5}
\newcommand{\ve}{\varepsilon}
\newcommand\C{{\mathbb C}}
\newcommand\R{{\mathbb R}}
\newcommand\N{{\mathbb N}}
\newcommand\Z{{\mathbb Z}}
\newcommand{\Ga}{\Gamma}
\newcommand{\vka}{\varkappa}
\newcommand{\cm}{{\rm m}}
\newcommand\nab{{\bf \nabla}}
\newcommand{\5}{{\hspace{0.5mm}}}
\newcommand{\3}{{\hspace{0.2mm}}}
\newcommand{\const}{\mathop{\rm const}\nolimits}
\newcommand{\rRe}{{\rm Re\5}}
\newcommand{\rIm}{{\rm Im\5}}
\renewcommand{\Pr}{\hspace{-6mm}{\bf Proof~}}
\newcommand{\Ran}{{\rm Ran\3}}
\renewcommand{\theequation}{\thesection.\arabic{equation}}
\newtheorem{theorem}{Theorem}[section]
\renewcommand{\thetheorem}{\arabic{section}.\arabic{theorem}}
\newtheorem{definition}[theorem]{Definition}
\newtheorem{lemma}[theorem]{Lemma}
\newtheorem{example}[theorem]{Example}
\newtheorem{remark}[theorem]{Remark}
\newtheorem{remarks}[theorem]{Remarks}
\newtheorem{cor}[theorem]{Corollary}
\newtheorem{proposition}[theorem]{Proposition}
\newcommand{\bd}{\begin{definition}}
\newcommand{\ed}{\end{definition}}
\newcommand{\bt}{\begin{theorem}}
\newcommand{\et}{\end{theorem}}
\newcommand{\bqt}{\begin{qtheorem}}
\newcommand{\eqt}{\end{qtheorem}}
\newcommand{\bp}{\begin{proposition}}
\newcommand{\ep}{\end{proposition}}
\newcommand{\bl}{\begin{lemma}}
\newcommand{\el}{\end{lemma}}
\newcommand{\bc}{\begin{cor}}
\newcommand{\ec}{\end{cor}}
\newcommand{\bex}{\begin{example}}
\newcommand{\eex}{\end{example}}
\newcommand{\bexs}{\begin{examples}}
\newcommand{\eexs}{\end{examples}}
\newcommand{\bexe}{\begin{exercice}}
\newcommand{\eexe}{\end{exercice}}
\newcommand{\br}{\begin{remark} }
\newcommand{\er}{\end{remark}}
\newcommand{\brs}{\begin{remarks}}
\newcommand{\ers}{\end{remarks}}
\begin{document}

\begin{titlepage}

\begin{center}
{\Large\bf On linear stability of crystals in 
\medskip\\
 the Schr\"odinger--Poisson model
}
\end{center}
\bigskip\bigskip

\begin{center}
{\large A. Komech}
\footnote{
Supported partly by 
Austrian Science Fund (FWF): P28152-N35,
and the grant of  RFBR 13-01-00073.}
\\
{\it Faculty of Mathematics of Vienna University\\
and Institute for Information Transmission Problems RAS } \\
e-mail:~alexander.komech@univie.ac.at
\bigskip\\
{\large E. Kopylova}
\footnote{
Supported partly by  
Austrian Science Fund (FWF): P27492-N25,
and the grant of  RFBR 13-01-00073.}
\\
{\it Faculty of Mathematics of Vienna University\\
and Institute for Information Transmission Problems RAS} \\
e-mail:~elena.kopylova@univie.ac.at
\end{center}
\vspace{1cm}

\begin{abstract}

\end{abstract} We consider the 
Schr\"odinger--Poisson--Newton equations for crystals
with a cubic lattice and one ion per cell.
We linearize this dynamics at the ground state and
introduce a novel class of the ion charge densities which provide
the stability of the linearized dynamics. 
This is the first result on linear stability for crystals. 

Our key result
is the {\it energy positivity}
for the 
Bloch generators 
of the 
linearized dynamics  under a
Wiener-type condition 
on the ion charge density.
We also assume  an additional condition which cancels
the negative contribution caused by electrostatic instability.

The proof of the energy positivity relies on a 
novel factorization of the corresponding Hamilton functional.
We show that the  energy positivity can fail if 
the additional condition  breaks down
while the Wiener condition holds.

The Bloch generators  are 
nonselfadjoint (and even nonsymmetric) Hamilton operators. 
We diagonalize these generators using our theory of  
spectral resolution of the Hamilton ope\-rators {\it with positive definite energy} 
\ci{KK2014a,KK2014b}.
Using this  spectral resolution, we establish the stability
of the linearized crystal dynamics.

{\bf Key words and phrases:}
crystal; lattice; field;  Schr\"odinger--Poisson  equations; Hamilton equation; ground state;  
linearization;
stability;  positivity;
Bloch transform;   Hamilton operator; self-adjoint operator;  
spectral resolution.

{\bf AMS subject classification:} 35L10, 34L25, 47A40, 81U05

\end{titlepage}

\section{Introduction}
First mathematical results on stability of matter were obtained 
by Dyson and Lenard 
in \ci{D1967, DL1968} 
where the energy bound from below
has been established.
The thermodynamic limit
for the Coulomb systems  
was studied first by Lebowitz and Lieb 
\ci{LL1969,LL1973}, see the survey and further development in \ci{LS2010}.
These results were extended  by Catto, L. Lions, Le Bris 
and others to Thomas-Fermie and Hartree-Fock models
\ci{CBL1998,CBL2001,CBL2002}. All these results concern  either
the convergence of
the ground state of finite particle systems
in the thermodynamic limit or 
the existence of the ground state
for infinite particle systems. 
The dynamical stability of infinite particle ground states
was never considered previously. 

We establish 
for the first time
the dynamical stability of crystal ground state
in linear approximation for the simplest Schr\"o\-din\-ger-Poisson model.
The ground state for this model was constructed in \ci{K2014}.

We consider crystals with the cubic
lattice  $\Ga= \Z^3$ and with one ion per cell.
 The electron cloud is described by  one-particle 
Schr\"odinger equation.
The ions are described as classical particles   
that corresponds to the  
Born and Oppenheimer  approximation.
The ions interact with the electron cloud via 
the scalar potential, which  is a solution to the corresponding Poisson equation.

This model
does not respect the Pauli exclusion principle for electrons.
However, it provides a convenient framework 
to introduce suitable functional tools, which might be useful 
for physically 
more realistic models (Thomas--Fermie, Hartree--Fock, and second quantized models). 
In particular, we find a novel Wiener-type 
stability criterion (\re{W1}), (\re{Wai}).

This investigation is motivated by the lack of a suitable mathematical model for a
rigorous analysis 
of fundamental quantum phenomena in the solid state physics: 
heat conductivity, electric conductivity, thermoelectronic emission, photoelectric effect, 
Compton effect, 
etc., see \ci{BLR}.
\medskip

We denote by  
$\sigma(x)$ the charge density of one 
ion:
\be\la{ro+}
\int_{\R^3} \sigma(x)dx=eZ>0, 
\ee 
where $e>0$ is the elementary charge.
Let $\psi(x,t)$ be the wave function of the electron field, and
$\Phi(x)$ be the electrostatic  potential generated by the ions and electrons.
We assume $\hbar=c=\cm=1$, where $c$ is the speed of light and $\cm$ is the electron mass.
Then the coupled equations  read
\beqn\la{LPS1}
i\dot\psi(x,t)\!\!&=&\!\!-\fr12\De\psi(x,t)-e\Phi(x,t)\psi(x,t),\qquad x\in\R^3,
\\
\nonumber\\
-\De\Phi(x,t)\!\!&=&\!\!\rho(x,t):=\sum_n\sigma(x-n-q(n,t))-e|\psi(x,t)|^2,\qquad x\in\R^3,
\la{LPS2}
\\
\nonumber\\
M\ddot q(n,t)
\!\!&=&\!\!-\langle\nab\Phi(x,t),\sigma(x-n-q(n,t))\rangle, 
\qquad n\in\Z^3.
\la{LPS3}
\eeqn
Here the 
brackets
 stand for the Hermitian scalar product in the Hilbert
space $L^2(\R^3)$ and for its different extensions, and the series (\re{LPS2}) converges in 
a suitable sense.
All derivatives here and below are understood in the sense of distributions.
These equations can be  written as the Hamilton system
with a formal
Hamilton  functional
\be\la{Hfor}
 \cH(\psi,q,p)=\fr12\int_{\R^3}[|\na\psi(x)|^2+\rho(x)G\rho(x)]dx+\sum_{n} \fr{p^2(n)}{2M},
\ee
where $G:=-\De^{-1}$ and
$ q:=(q(n): ~n\in\Z^3)$, $p:=(p(n): ~n\in\Z^3)$, and $\rho(x)$ is defined 
similarly to
(\re{LPS2}). 
Namely, the system (\re{LPS1})-(\re{LPS3}) can 
be formally written as 
\be\la{HSi}
i\dot \psi(x,t)=\pa_{\ov \psi(x)}\cH,
~~~
\dot q(n,t)=\pa_{p(n)}\cH,
~~~
\dot p(n,t)=-\!\pa_{q(n)}\cH,
\ee
where $\pa_{\ov z}:=\fr12(\pa_{z_1}+i\pa_{z_2})$ with $z_1=\rRe z$ and $z_2=\rIm z$.
A ground state of a crystal is a $\Ga$-periodic stationary solution  
\be\la{gr}
\psi^0(x)e^{-i\om^0 t}~,~~~ \Phi^0(x)~,~~~~q^0(n)=q^0~~{\rm for}~~ n\in\Z^3
\ee
with a real $\om^0$
(and $q^0\in\R^3$ can be chosen arbitrary).
A ground state was  constructed in \ci{K2014}.
Substituting (\re{gr}) into (\re{LPS1})-(\re{LPS3}), we obtain the system
\beqn\la{LPS10}
\om^0\psi^0(x)\!\!&=&\!\!-\fr12\De\psi^0(x)-e\Phi^0(x)\psi^0(x),\qquad x\in T^3:=\R^3/\Ga,
\\
\nonumber\\
-\De\Phi^0(x)\!\!&=&\!\!\rho^0(x):=\si^0(x)-e|\psi^0(x)|^2,\qquad x\in T^3,
\la{LPS20}
\\
\nonumber\\
\la{LPS30}
0\!\!&=&\!\!- \5\langle\nab\Phi^0(x),
\si(x-n-q^0)\rangle,\qquad n\in\Z^3,
\eeqn
where we denote
\be\la{ro+2}
\si^0(x):=
\sum_{n}\si(x-n-q^0).
\ee
In present paper we prove
the 
stability
for the 
{\it formal linearization}
of the nonlinear  system (\re{LPS1})-(\re{LPS3})
at the ground state (\re{gr}).
Namely, 
substituting
\be\la{lin2i}
 \psi(x,t)=[\psi^0(x)+\Psi(x,t)]e^{-i\om^0 t},~~~~ q(n,t)=q^0+Q(n,t)
\ee
into the nonlinear equations (\re{LPS1}), (\re{LPS3}) 
with $\Phi(x,t)=G\rho(x,t)$,
we {\it formally} obtain  the linearized equations (see Appendix A)
\be\la{LPS1Li}
\!\!\!\!\!\!\!\!\!\!
\left.\ba{rcl}
\!\!\!\! &\!\!\!\!&\!\!\!\![i\pa_t+\om^0]\Psi(x,t)=-\fr12\De\Psi(x,t)-e\Phi^0(x)\Psi(x,t)-e\psi^0(x)G\rho_1(x,t)\\\\
\!\!\!\! &\!\!\!\!&\!\!\!\!\dot Q(n,t)=P(n,t)/M\\\\
\!\!\!\!&\!\!\!\!&\!\!\!\! \dot P(n,t)=-\langle\nab G\rho_1(t), \sigma(x-n-q^0)\rangle
+\langle\na\Phi^0,\na\si(x-n-q^0)   Q(n,t)\rangle
\ea\right|
\ba{c}x\in\R^3\\
n\in\Z^3
\ea
\ee
Here  
$\rho_1(x,t)$ is the linearized charge density
\be
 \rho_1(x,t)=-\sum_n\na
 \si(x-n-q^0) Q(n,t)-2e\rRe[\psi^0(x)\ov{\Psi(x,t)}],\la{ro1i}
\ee
The system (\re{LPS1Li}) is linear over $\R$ but it is not complex linear. 
This is due to the  last term in (\re{ro1i}),  which appears from 
the linearization of the term  $|\psi|^2=\psi\ov\psi$ in (\re{LPS2}).
However, we need the complex linearity for the application of the spectral theory.
This why we will consider below the complexification  of the system (\re{LPS1Li})
writing it in the variables 
$\Psi_1(x,t):=\rRe\Psi(x,t),\Psi_2(x,t):=\rIm \Psi(x,t)$.
We will consider the case when the ground state $\psi^0(x)$ can be 
taken to be a
 real function.
In this case
\be\la{realdot}
\rRe[\psi^0(x)\ov{\Psi(x,t)}]=\psi^0(x)\Psi_1(x,t).
\ee
Further we denote  
\be\la{Yi}
Y(t)=(\Psi_1(\cdot,t),\Psi_2(\cdot,t),Q(\cdot,t),P(\cdot,t)).
\ee
Then (\ref{LPS1Li})  can be written as
\be\la{JDi}
\dot Y(t)=AY(t),\qquad
 A=\left(\ba{ccrl}
 0   &  H^0 &  0  & 0\medskip\\
-H^0-2e^2\psi^0G\psi^0 &   0  & -S  & 0\\
        0                         &           0                        &   0    &   M^{-1}\\
     -2S^{\5*}                   &              0            &  -T    &  0\\
\ea\right),
\ee
where  
$H^0:=-\fr12\De-e\Phi^0(x)-\om^0$,
the operators $S$ and $T$ correspond to 
matrices 
(\re{S}) and (\re{T}) respectively, and
$\psi^0$ 
denotes the operators of  multiplication 
by the real function $\psi^0(x)$.
The Hamilton representation (\re{HSi}) implies that 
\be\la{AJB}
A=JB,\qquad B=D^2\cH(\psi^0,q^0,0)=\left(\ba{cccl}
 2H^0+4e^2\psi^0 G\psi^0 & 0 & 2S & 0
 \medskip\\
 0 & 2H^0  &0 & 0\medskip\\
 2S^{\5*}  &    0  &   T    & 0  \\
      0      &    0            &   0    &  M^{-1} \\
\ea\right),
\ee
 where $J$ is  the skew-symmetric matrix  (\ref{J}).
 Our basic result is the stability 
for the linearized system (\ref{JDi}):
for any finite energy initial state 
there exists a unique  global solution, and it is bounded in the energy norm.
\medskip

We show that the generator $A$ is densely defined 
in the Hilbert space $\cX:=L^2(\R^3)\oplus L^2(\R^3)\oplus\R^3\oplus\R^3$
and
commutes with  translations by vectors from $\Ga$.
Hence, the  equation 
(\re{JDi}) can be reduced by the Fourier--Bloch--Gelfand--Zak
transform 
to  equations
with the corresponding Bloch generators  $\ti A(\theta)=J\ti B(\theta)$,
which depend on the parameter $\theta$ from the Brillouin zone
 $\Pi^*:=[0,2\pi]^3$. The Bloch energy operator $\ti B(\theta)$ is given by  
\be\la{hess2i}
 \ti B(\theta) =\!\left(\!\ba{cccl}
 2\ti H^0(\theta)+4e^2\psi^0 \ti G(\theta)\psi^0&  0 & 2\ti S(\theta)&0
 \medskip\\
 0 &2\ti H^0(\theta)&  0 &0
\medskip\\
 2\ti S^{\5*}(\theta) &   0
 &   \hat T(\theta)             & 0  \\
      0    &    0            &   0 &  M^{-1} \\
\ea\!\right),\qquad 
\theta\in\Pi^*
\setminus\Ga^*,
\ee
 where $\Ga^*:=2\pi\Z^3$, and
$\ti H^0(\theta):=-\fr12(\na+i\theta)^2-e\Phi^0(x)-\om^0$. 
Further, $\ti G(\theta)$ is the inverse to the 
operator 
$(i\na-\theta)^2:H^2(T^3)\to L^2(T^3)$. Finally, $\ti S(\theta)$ and $\hat T(\theta)=
\hat T_2(\theta)+\hat T_1(\theta)$
are defined respectively by (\re{tiHS}) and  (\re{K3}), (\re{K33}).

 However, 
the operator $A$ is not selfadjoint and even not symmetric,
which is a typical situation for the linearization of $U(1)$-invariant
nonlinear equations  \ci[Appendix B]{KK2014a}.
Respectively,  the  Bloch generators 
$\ti A(\theta)$ are not  selfadjoint in the Hilbert space
\be\la{XT3} 
\cX(T^3):=L^2(T^3)\oplus L^2(T^3)\oplus \C^3\oplus \C^3,\qquad  T^3:=\R^3/\Ga.
\ee
The main crux here  is that 
we cannot apply the von Neumann 
spectral theorem to the nonselfadjoint generators $A$ and $\ti A(\theta)$. 
We solve this problem by applying our spectral theory of the Hamilton operators 
with positive energy
\ci{KK2014a,KK2014b}, which  
is an
infinite-dimensional 
version of some Gohberg and Krein ideas 
from the  theory of parametric resonance
\ci[Chap. VI]{GK}.
This is why we need
the positivity of the energy operator $\ti B(\theta)$:
\be\la{Hpos2}
\cE(\theta,Y):=\langle Y, \ti B(\theta)Y\rangle_{T^3}\ge \vka(\theta)\Vert Y\Vert_{\cV(T^3)}^2,\qquad~~\mbox{\rm a.e.}~~\theta\in
\Pi^*\setminus\Ga^*,
\ee
where $\vka(\theta)>0$, 
the brackets  stand for the scalar product in $\cX(T^3)$, 
and we denote 
\be\la{cVi}
\cV(T^3):=H^1(T^3)\oplus H^1(T^3)\oplus \C^3 \oplus \C^3.
\ee
This positivity allows us to construct the spectral resolution 
of $\ti A(\theta)$ which 
implies the stability for
the linearized dynamics (\re{JDi}).
\medskip

The key result of the present paper is the proof of the positivity
(\re{Hpos2}) 
for the ions's charge  densities 
 $\si$ satisfying the following conditions on the corresponding Fourier transform $\ti\si(\xi)$.
 The first one is the Wiener-type condition  
\be\la{W1}
\qquad\mbox{\bf Wiener Condition:}\qquad\qquad  \Si(\theta):=\sum_m\Big[
 \fr{\xi\otimes\xi}{|\xi|^2}|\ti\si(\xi)|^2\Big]_{\xi=2\pi m-\theta}>0~,
 \quad  ~~{\rm a.e.}~~\theta\in \Pi^*\setminus\Ga^*.
\ee
This condition is
 an analog of Fermi Golden Rule
for crystals.
 The second condition reads
 \be\la{Wai}
\ti\si(2\pi m)=0,\quad m\in\Z^3\setminus 0.
\ee
The proof of the positivity (\re{Hpos2}) 
relies on a novel
factorization of the 
Hamilton functional.
This positivity 
 necessarily breaks down   at  $\theta\in\Ga^*$.
Examples \re{ex} and \re{ex2} demonstrate that 
the positivity  can break down at 
some  other points and submanifolds of $\Pi^*$.

Our main novelties are the following:
\medskip\\
I. The factorization of energy (\re{bb32}), (\re{bb33}) 
and (\re{fact}), (\re{sq}). 
\medskip\\
II. The energy bound from below (\re{bb}) for general densities $\si(x)$. 
\medskip\\
III.  The 
energy positivity   (\re{Hpos2}) under conditions (\re{W1}) and
(\re{Wai}) on $\si(x)$:
we show that the Wiener  condition  (\re{W1}) is necessary  
and sufficient for
the positivity (\re{Hpos2}) under assumption (\re{Wai}) (Theorem \re{tpose}).
\medskip\\
IV. An asymptotics of the ground state as $e\to 0$.
\medskip\\
V. An example of negative energy when the condition  (\re{Wai}) breaks down 
while the Wiener condition (\re{W1}) holds (Lemma \re{lne}).
\medskip\\
VI. Spectral resolution 
of nonselfadjoint Hamilton generators
and  stability of the linearized dynamics. 
\br\la{remW}
The condition  (\re{Wai})  
cancels a negative contribution to
the energy, 
which is due to the electrostatic instability ("Earnshaw Theorem" \ci{Stratton}, see Remark \re{rT2}). 
\er
Let us comment on previous results in these directions.
\medskip\\
The crystal ground state 
for the Hartree-Fock equations 
was constructed by Catto, Le Bris, and  Lions  \ci{CBL2001,CBL2002}.
For the Thomas-Fermie model similar results were obtained in \ci{CBL1998}.
\medskip\\
The corresponding ground state 
in the Schr\"odinger-Poisson model was constructed in \ci{K2014}.   
The stability for the linearized dynamics 
was not established previously in any model. 
\medskip\\
In \ci{CS2012}, Canc\'es and Stoltz have established the well-posedness  for 
local perturbations of the ground state density matrix
in an infinite crystal 
for the reduced  Hartree-Fock model of crystal
in the  {\it random phase approximation}
with the Coulomb potential $w(x-y)=1/|x-y|$.
However, the  space-periodic nuclear potential
in the equation \ci[(3)]{CS2012}
does not depend on time that corresponds to 
the fixed ions's positions. Thus the back reaction of the electrons onto 
the nuclei is neglected.
\medskip\\
The nonlinear Hartree-Fock dynamics
for 
compact perturbations of the ground state
without the  random phase approximation
is not studied yet,
see the discussion in 
\ci{BL2005} and in the introductions of \ci{CLL2013,CS2012}.
\medskip\\
The paper \ci{CLL2013} deals with random reduced HF model of crystal  when 
the ions charge density and the electron density matrix are random processes,
and the action of the lattice translations on the probability space is ergodic.
The authors obtain suitable generalizations of the Hoffmann-Ostenhof 
and Lieb-Thirring inequalities  for ergodic density matrices, 
and
construt random potential which is a solution  to 
 the Poisson equation 
with the corresponding stationary stochastic  charge density. 
The main result is the  coincidence of this model with the thermodynamic limit in  
the case of the short range Yukawa interaction.
\medskip\\
In \ci{LS2014-1}, Lewin and Sabin  established the well-posedness for the 
reduced 
von Neumann equation 
with density matrices of infinite trace 
for pair-wise interaction potentials $w\in L^1(\R^3)$. The authors  also
proved  the asymptotic stability of the ground state 
for 2D crystals \ci{LS2014-2}.
Nevertheless, the case of
the Coulomb potential in 3D remains open.
\medskip\\
The spectral theory of the Schr\"odinger  operators
with space-periodic potentials 
is well developed, see \ci{RS4} and the references therein.
The scattering theory for short-range and long-range perturbations of
such operators was  constructed in \ci{GN1,GN2}.
\medskip

The plan of our paper is the following.
In Section 2  we recall our result \ci{K2014} on the existence of a ground state,
and in  Section 3 we establish small charge asymptotics of the ground state.
In Sections 4--6 we study the Hamilton structure
of the linearized dynamics and establish the energy bound from below.
In Section 7 we calculate the generator of the 
linearized dynamics 
in the  Fourier--Bloch representation.
In
Section 8 we prove  the  positivity of energy. 
In Section 9 we apply this positivity to
the stability of the linearized dynamics.
Finally, in Section 10 we construct examples of negative energy.
Appendices concern some technical calculations.
\medskip\\
{\bf Acknowledgments} The authors are grateful to Herbert Spohn for discussions and remarks.

\setcounter{equation}{0}
\section{Space-periodic ground state}
Let us recall the results of \ci{K2014} on the existence of the ground state (\re{gr}).  
The Poisson equation (\re{LPS20}) for the $\Ga$-periodic potential $\Phi^0$
implies the neutrality of the periodic
cell $T^3=\R^3/\Ga$:
\be\la{neu}
 \int_{T^3} \rho^0(x)dx=0,
\ee
which is equivalent to the normalization condition
\be\la{nor}
 \int_{T^3} |\psi^0(x)|^2dx=Z
\ee
by (\re{ro+}).
We assume that
$ Z>0$,
since otherwise the theory is trivial.
The existence of the ground state (\re{gr}) is proved in \ci{K2014} under the condition 
\be\la{L12}
\si_{\rm per}(x):=\sum_n\sigma(x-n)\in  L^2(T^3).
\ee
The ion position $q^0\in T^3$ can be chosen arbitrary, and we will set
$
q^0=0.
$
\subsection{Minimization of energy per cell}
The wave function $\psi^0$ is constructed as a minimal point of the energy per cell
\be\la{U}
 U(\psi)=\fr12\int_{T^3}[|\na\psi(x)|^2+\rho(x)G_{\rm per}\rho(x)]dx,
\ee
where 
\be\la{U2}
\rho(x):=\si_{\rm per}(x)-e|\psi(x)|^2, 
\ee
while
the operator $G_{\rm per}:=-\De_{\rm per}^{-1}$ 
is defined by 
\be\la{Qc}
G_{\rm per}\vp(x)=\sum_{m\in \Z^3\setminus 0}e^{-i2\pi mx}\fr{\check\vp(m)}{|2\pi m|^2},
\qquad 
\check\vp(m)=\int_{T^3} e^{i2\pi mx}\vp(x)dx.
\ee
More precisely,
\be\la{min}
 U(\psi^0)=\min_{\psi\in \cM} U(\psi),
\ee
where $\cM$ denotes the manifold
\be\la{cM}
\cM:=\{ \psi\in H^1(T^3):\int_{T^3}|\psi(x)|^2dx=Z\}.
\ee
\subsection{Smoothness of the ground state}
The results  \ci{K2014} imply that
there exists a ground state with $\psi^0,\Phi^0\in H^2(T^3)$.
Hence
$\psi^0\Phi^0\in H^2(T^3)$, and the equation (\re{LPS10}) implies that 
\be\la{H4}
\psi^0\in H^4(T^3)\subset C^2(T^3).
\ee
In other words, 
\be\la{H42}
\psi^0(x)=\sum_{m\in \Z^3} \check\psi^0(m)e^{i2\pi m x},\qquad
\sum_{m\in \Z^3} \langle m\rangle^8|\check\psi^0(m)|^2<\infty,\qquad\langle m\rangle
:=(1+|m|^2)^{1/2}.
\ee


\setcounter{equation}{0}
\section{Small-charge asymptotics of the ground state}

We will 
need below
the asymptotics  as $e\to 0$ of the ground state (\re{gr})
corresponding 
to
a one-parametric family
of ion densities 
\be\la{fam}
\si(x)=e\mu(x)
\ee
with some fixed function $\mu\in L^2(\R^3)$. 
We   assume that
\be\la{muj}
\mu_{\rm per}(x):=\sum_{n\in \Z^3} \mu(x-n)\in L^2(T^3)
\ee
in accordance with  (\re{L12}).
Now the energy (\re{U}) reads
\be\la{Uc}
 U(\psi)=\fr12\int_{T^3}[|\na\psi(x)|^2
 +e^2\nu(x)G_{\rm per}\nu(x)]dx,\qquad \nu(x):=\mu_{\rm per}(x)-|\psi(x)|^2.
\ee
Denote by $\psi^0_e,\om^0_e$ the 
family of  ground states
with the parameter $e\in (0,1]$.
The energy (\re{Uc}) is obviously bounded
uniformly  in $e\in (0,1]$ for any fixed $\psi\in \cM$.
Hence, the energy of the minimizers   is also bounded
uniformly in $e\in (0,1]$.
In particular, 
the family   $\psi^0_e$ is bounded  in $H^1(T^3)$,
\be\la{psi}
\Vert\psi^0_e \Vert_{H^1(T^3)}\le C, \qquad e\in (0,1].
\ee
On the other hand, 
\be\la{L5}
\int\nu^0_e(x)G_{\rm per}\nu^0_e(x)dx\le C ,\qquad \nu^0_e(x):=
\mu_{\rm per}(x)-|\psi_e^0(x)|^2.
\ee
This estimate is
due to the uniform bound
\be\la{nu0}
\Vert\nu^0_e\Vert_{L^2(T^3)}\le C,\qquad  e\in (0,1]
\ee
which holds by (\re{muj}) and (\re{psi}).
Further, 
the equation (\re{LPS20})  reads
\be\la{LPS20e}
-\De\Phi^0_e(x)=e\nu^0_e(x).
\ee
We will choose the solution 
$\Phi^0_e=eG_{\rm per}\nu^0_e$, where the operator $G_{\rm per}$
is defined by (\re{Qc}). 
The definition (\re{Qc}) implies the bound
\be\la{pot}
\Vert\Phi^0_e\Vert_{H^2(T^3)}\le 
e\Vert \nu^0_e\Vert_{L^2(T^3)}
\le
Ce, \qquad e\in (0,1]
\ee
by (\re{nu0}).
\begin{lemma}\la{llpos}
Let  condition  (\re{muj}) hold. Then for sufficiently  small $e>0$,
\be\la{Sope}
H^0_e:=-\fr12\De-e\Phi^0_e(x)-\om^0_e\ge 0,
\ee
and 
the ground state admits the following 
asymptotics as $e\to 0${\rm :}
\be\la{om0}
\om^0_e=\cO(e^2),
\ee
\be\la{p0}
\psi^0_e(x)=\ga_e+\chi_e(x),\qquad |\ga_e|^2=Z+\cO(e^4),
\qquad
\Vert\chi_e\Vert_{H^2(T^3)}= \cO(e^2).
\ee
\end{lemma}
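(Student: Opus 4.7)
The plan is to exploit the fact that at $e=0$ the energy $U$ reduces to $\fr12\Vert\nabla\psi\Vert_{L^2(T^3)}^2$ on $\cM$, which is minimized by constants. First I would compare $U(\psi^0_e)$ against the constant test function $\psi\equiv\sqrt Z\in\cM$. The neutrality (\re{neu}) combined with (\re{ro+}) and (\re{fam}) forces $\int_{T^3}\mu_{\rm per}\,dx=Z$, so $\mu_{\rm per}-Z$ has zero mean and
\[
U(\sqrt Z)=\fr{e^2}{2}\int_{T^3}(\mu_{\rm per}-Z)G_{\rm per}(\mu_{\rm per}-Z)\,dx=\cO(e^2).
\]
Minimality then gives $U(\psi^0_e)=\cO(e^2)$, and since $U(\psi)\ge\fr12\Vert\nabla\psi\Vert^2_{L^2(T^3)}$ we obtain $\Vert\nabla\psi^0_e\Vert_{L^2(T^3)}=\cO(e)$, sharpening the a priori bound (\re{psi}).

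Next I would decompose $\psi^0_e=\gamma_e+\chi_e$ with $\gamma_e:=\int_{T^3}\psi^0_e(x)\,dx$ and mean-zero $\chi_e$. The Poincar\'e inequality gives $\Vert\chi_e\Vert_{L^2}\le C\Vert\nabla\psi^0_e\Vert_{L^2}=\cO(e)$, and (\re{nor}) yields the preliminary $|\gamma_e|^2=Z+\cO(e^2)$, in particular bounded below for small $e$. Testing (\re{LPS10}) against $\psi^0_e$ produces
\[
\omega^0_e\, Z=\fr12\Vert\nabla\psi^0_e\Vert^2_{L^2}-e\int_{T^3}\Phi^0_e|\psi^0_e|^2\,dx;
\]
the first summand is $\cO(e^2)$, and (\re{pot}) with the Sobolev embedding $H^2(T^3)\hookrightarrow L^\infty(T^3)$ gives $\Vert\Phi^0_e\Vert_{L^\infty}=\cO(e)$, proving (\re{om0}). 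A one-step bootstrap then promotes the $H^1$-bound on $\chi_e$ to $H^2$. Rewriting (\re{LPS10}) as $-\fr12\De\chi_e=(\omega^0_e+e\Phi^0_e)(\gamma_e+\chi_e)$, averaging over $T^3$ recovers $\omega^0_e\gamma_e=-e\int_{T^3}\Phi^0_e\psi^0_e\,dx$, and subtracting this constant part leaves a mean-zero right-hand side of $L^2$-norm $\cO(e^2)$ (using $\omega^0_e=\cO(e^2)$, $\chi_e=\cO(e)$ in $L^2$, and $\Phi^0_e=\cO(e)$ in $L^\infty$). Elliptic regularity for $-\De$ on mean-zero periodic functions then gives $\Vert\chi_e\Vert_{H^2}=\cO(e^2)$, which feeds back into the normalization identity $Z=|\gamma_e|^2+\Vert\chi_e\Vert_{L^2}^2$ to yield $|\gamma_e|^2=Z+\cO(e^4)$, establishing (\re{p0}).

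It remains to prove the positivity (\re{Sope}). By $H^2(T^3)\hookrightarrow C(T^3)$ the function $\psi^0_e$ converges uniformly to the non-zero constant $\gamma_0$, which after a phase rotation may be taken positive, so $\psi^0_e$ is nowhere vanishing for $e$ small. The periodic Schr\"odinger operator $-\fr12\De-e\Phi^0_e$ on $L^2(T^3)$ has compact resolvent, and its ground state is simple and nodeless by the Perron-Frobenius argument. Since $\psi^0_e$ is nodeless and solves the eigenvalue equation with eigenvalue $\omega^0_e$, it must be the ground state; thus $-\fr12\De-e\Phi^0_e\ge\omega^0_e$ on $L^2(T^3)$, which extends to $L^2(\R^3)$ via Bloch decomposition and proves (\re{Sope}). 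The main care in the argument lies in the order of operations: the crude estimate $\chi_e=\cO(e)$ alone gives only $|\gamma_e|^2=Z+\cO(e^2)$, so the bootstrap through the eigenvalue equation is essential to reach the sharp $\cO(e^4)$ claim, and in turn the sharp closeness to a constant is what drives the nodelessness and thereby the positivity.
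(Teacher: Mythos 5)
Your argument is correct, and it reaches the conclusions of Lemma \re{llpos} by a genuinely different route than the paper. The paper works in Fourier space: from $(\fr12|2\pi m|^2-\om^0_e)\check\psi^0_e(m)=\check r_e(m)$ with $\Vert r_e\Vert_{L^2}=\cO(e^2)$ it deduces that $\check\psi^0_e$ concentrates, up to $\cO(e^4)$, on a single shell $|2\pi m|^2=\lam_e$, then uses the comparison with the constant trial function $\sqrt Z$ to force $\lam_e=0$, and reads off (\re{om0}), (\re{p0}) from the shell estimates; the nonnegativity (\re{Sope}) is then obtained by a spectral-gap perturbation argument ($0$ is an eigenvalue of $H^0_e$, the perturbation $e\Phi^0_e+\om^0_e$ is $\cO(e^2)$ in $L^\infty$, and the next eigenvalue of $-\fr12\De$ on $T^3$ is $2\pi^2$). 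You instead use the comparison with $\sqrt Z$ at the outset to get $\Vert\na\psi^0_e\Vert_{L^2}=\cO(e)$, then Poincar\'e, the quadratic-form identity for $\om^0_e$, and one elliptic bootstrap on the mean-zero part; this is more streamlined and avoids the $\lam_e$ bookkeeping, at the price of invoking elliptic regularity and the Sobolev embedding rather than elementary Fourier sums. Both proofs rest on the same two inputs: the trial function $\sqrt Z$ and the bound (\re{pot}). Your Perron--Frobenius route to positivity is a valid alternative to the paper's gap argument, and your extension of (\re{Sope}) to $L^2(\R^3)$ (which the paper's proof does not address --- it only treats the periodic realization on $L^2(T^3)$) is correct provided you quote the standard fact that the bottom of the spectrum of a periodic Schr\"odinger operator equals its lowest periodic, i.e.\ $\theta=0$, eigenvalue.

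One point to tighten: a minimizer $\psi^0_e$ is a priori complex-valued, and a phase rotation makes $\ga_e>0$ but need not make $\psi^0_e$ real, so ``nodeless'' in the sense of nowhere vanishing does not by itself force a complex eigenfunction into the ground eigenspace (a nowhere-vanishing function can have sign-changing real and imaginary parts). The fix is immediate from your own estimates: since $\Vert\chi_e\Vert_{L^\infty}\le C\Vert\chi_e\Vert_{H^2(T^3)}=\cO(e^2)$, the function $u_e:=\rRe\big(e^{-i\arg\ga_e}\psi^0_e\big)\ge |\ga_e|-\Vert\chi_e\Vert_{L^\infty}\ge\sqrt Z/2$ for small $e$ is a real, strictly positive eigenfunction of $-\fr12\De-e\Phi^0_e$ with the real eigenvalue $\om^0_e$, and a strictly positive eigenfunction cannot be orthogonal to the positive ground state; hence $\om^0_e$ is the lowest eigenvalue and (\re{Sope}) follows.
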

\Pr  i) Equation (\re{LPS10}) reads
\be\la{LPS4r}
\om^0_e\psi^0_e(x)=-\fr12\De\psi^0_e(x)-e\Phi^0_e(x)\psi^0_e(x).
\ee
Hence, 
\be\la{om}
\om^0_e\langle\psi^0_e,\psi^0_e\rangle_{T^3}=\om^0_e Z=
\fr 12\langle\na\psi^0_e,
\na\psi^0_e\rangle_{T^3}-e\langle \Phi^0_e\psi^0_e,\psi^0_e\rangle_{T^3}, 
\ee
which implies the uniform bound
\be\la{ub}
|\om^0_e|\le C<\infty,\qquad e\in(0,1]
\ee
by (\re{nor}),  (\re{psi}) and (\re{pot}).
Moreover, (\re{LPS4r}) and (\re{pot}) suggest that 
$\om^0_e$ is close to an eigenvalue of $-\fr12\De$:
\be\la{eig}
\om^0_e\approx |2\pi k|^2
\ee
with some $k\in\Z^3$. Indeed, (\re{LPS4r}) can be rewritten as 
\be\la{re}
(\fr 12 |2\pi m|^2-\om^0_e)\check\psi^0_e(m)=
\check {r_e}(m),\qquad r_e:=e\Phi^0_e\psi^0_e
\ee
and hence, 
\be\la{re2}
\sum_{m\in\Z^3} (\fr 12 |2\pi m|^2-\om^0_e)^2|\check\psi^0_e(m)|^2=\cO(e^4),
\ee
since $\Vert r_e\Vert_{L^2(T^3)}=\cO(e^2)$ by (\re{pot}).
Denote by $\lam_e$ the value of $|2\pi m|^2$ corresponding to 
the minimal
magnitude of $|\fr 12 |2\pi m|^2-\om^0_e|$. Then  (\re{re2}) implies that
\be\la{re5}
\sum_{|2\pi m|^2\ne\lam_e} |\check\psi^0_e(m)|^2
=\cO(e^4),
\ee
since the set of possible values of $\fr12|2\pi m|^2-\om^0_e$ is discrete and 
possible values of $\om^0_e$ are bounded by (\re{ub}).
Moreover,  (\re{re2}) can be rewritten as
\be\la{re2'}
(\fr12\lam_e-\om^0_e)^2Z
+\sum_{|2\pi m|^2\ne  \lam_e}\Big[(\fr 12 |2\pi m|^2-\om^0_e)^2-(\fr12\lam_e-\om^0_e)^2\Big]
|\check\psi^0_e(m)|^2
=\cO(e^4)
\ee
since 
\be\la{re4}
\sum_{m\in\Z^3} |\check\psi^0_e(m)|^2= Z
\ee
due to the normalization (\re{nor}).
Hence,
\be\la{re6}
|\fr12\lam_e-\om^0_e|=\cO(e^2),
\ee
since the sum in (\re{re2'}) is nonnegative.
Let us show that
(\re{re2'}) also implies that
\be\la{re6'}
\sum_{|2\pi m|^2\ne  \lam_e}(|2\pi m|^2-\lam_e)^2|\check\psi^0_e(m)|^2=\cO(e^4).
\ee
First, (\re{re2'}) gives that
$$
\sum_{|2\pi m|^2\ne  \lam_e}
(|2\pi m|^2-\lam_e)(\fr 12 |2\pi m|^2+\fr12\lam_e-2\om^0_e)
|\check\psi^0_e(m)|^2
=\cO(e^4)
$$
However,
$
2\om^0_e=\lam_e+\cO(e^2)
$
by (3.21). Hence,
$$
\sum_{|2\pi m|^2\ne  \lam_e}
(|2\pi m|^2-\lam_e)(|2\pi m|^2-\lam_e+\cO(e^2))
|\check\psi^0_e(m)|^2
=\cO(e^4).
$$
Now (\re{re6'}) follows from (\re{re5}) 
since $\lam_e$ is bounded for small $e>0$
by (\re{re6}) and (\re{ub}).
\medskip\\
ii) Now let us  prove that $\lam_e=0$ for small $e>0$. Indeed, 
the energy of the ground state reads 
\be\la{eg}
U(\psi^0_e)=\fr12\sum_{m\in\Z^3} |2\pi m|^2|\check\psi^0_e(m)|^2+\cO(e^2)
\ee
by (\re{Uc}) and  (\re{L5}). 
On the other hand, (\ref{re6'}) implies
\be\la{eg2}
\sum_m |2\pi m|^2|\check\psi^0_e(m)|^2
=\lam_e Z+\sum_{|2\pi m|^2\ne  \lam_e} (|2\pi m|^2-\lam_e)|\check\psi^0_e(m)|^2
=\lam_e Z+\cO(e^4).
\ee
Substituting (\re{eg2}) into (\re{eg}), we obtain 
\be\la{eg3}
U(\psi^0_e)
=\fr12\lam_e Z+\cO(e^2),\quad \lam_e\ge 0.
\ee
On the other hand, taking $\psi(x)\equiv \sqrt{Z}$,  we ensure 
that the energy minimum (\re{min})  does not exceed $\cO(e^2)$.
Hence, (\re{eg3}) implies that $\lam_e=0$ for small $e>0$, 
since the set of all possible values of $\lam_e Z$ is discrete.
Therefore, (\re{om0}) holds by  (\re{re6}).
\medskip\\
iii) Now we can prove the asymptotics (\re{p0}). Namely, the 
first identity holds if we set
\be\la{ho}
\ga_e=\check\psi^0_e(0),\qquad\chi_e(x)=
\sum_{m\ne 0}  e^{-i2\pi mx}\check\psi^0_e(m).
\ee
Then  the 
second asymptotics of  (\re{p0}) holds
by
(\re{re4}) and  (\re{re5}) with $\lam_e=0$.
The last asymptotics  of  (\re{p0}) holds since
\be\la{re22}
\sum_{m\ne 0} |2\pi m|^4|\check\psi^0_e(m)|^2=\cO(e^4)
\ee
due to (\re{re6'}) with $\lam_e=0$. 
Finally, (\re{pot}) and (\re{om0})  with small $e>0$
imply that the lowest eigenvalue of the Schr\"odinger operator 
$H^0_e$
 in $L^2(T^3)$ is close to zero. Hence,  its zero eigenvalue 
is exactly the lowest eigenvalue,
since the spectrum of this operator is discrete.
Therefore, the nonnegativity   (\re{Sope}) is proved for small $e>0$.
\bo

\setcounter{equation}{0}
\section{Linearized dynamics}
Let us consider the linearized system  (\re{LPS1Li}).
We recall that $G:=-\De^{-1}$. 
The meaning of the terms with $G$ will be adjusted below, see 
Lemma \re{lDB}. We assume further that (\re{L12}) holds,
and additionally, 
\be\la{L123i}
\langle x\rangle^2\si\in L^2(\R^3),\qquad (\De-1)\si\in L^1(\R^3).
\ee
For $f(x)\in C_0^\infty(\R^3)$ the Fourier transform is defined by
\be\la{Fu2}
f(x)=\fr 1{(2\pi)^3}\int_{\R^3} e^{-i\xi x}\ti f(\xi)d\xi,\qquad x\in \R^3;
\qquad
\ti f(\xi)=\int_{\R^3} e^{i\xi x}f(x)dx,\qquad \xi\in \R^3.
\ee
The  conditions (\re{L123i}) imply that
\be\la{L123}
(\De-1)\ti\si\in L^2(\R^3),\qquad \langle\xi\rangle^2\ti\si(\xi)\le \const.
\ee
We consider the case when the ground state $\psi^0(x)$ can be 
taken to be a
 real function.
Then   (\re{LPS1Li})--(\re{realdot}) imply that 
the operator-matrix $A$
 is given by  (\re{JDi})
where 
$ S$  denotes the operator with the ``matrix'' 
\be\la{S}
 S(x,n):=e\psi^0(x)G\na\si(x-n):~~n\in\Z^3,~x\in\R^3.
\ee
Finally, $T$ is the real matrix with entries
\be\la{T}
T(n,n'):=-\ds\langle  G\na\otimes\na\si(x-n'),  \si(x-n) \rangle
+  \ds\langle\Phi^0,\na\otimes \na\si\rangle\de_{nn'}
=T_1(n-n')+T_2(n-n').
\ee
The operators 
$G\psi^0: L^2(\R^3)\to L^2(\R^3)$ and $S:l^2_3:=l^2_3(\Z^3)\otimes \C^3\to L^2(\R^3)$ 
are not bounded due to the ``infrared divergence", see Remark \re{run}.
In the next section, we will construct a dense domain for all these operators. 

On the other hand, the corresponding 
operators $T_1$  and $T_2$ are bounded by the following lemma.
Denote by $\Pi$  the primitive cell 
\be\la{cellPi}
\Pi:=\{(x_1,x_2,x_3):0\le x_k\le 1,~k=1,2,3\}.
\ee
Let us define the Fourier transform on $l^2_3$ as
\be\la{Fu}
\hat Q(\theta)=\sum_{n\in\Z^3} e^{in\theta}Q(n),\qquad~~{\rm a.e.}~~ \theta\in \Pi^*;\qquad
Q(n)=\fr 1{|\Pi^*|}\int_{\Pi^*}   e^{-in\theta}\hat Q(\theta)d\theta,~~n\in\Z^3,
\ee
where  $\Pi^*=2\pi \Pi$ denotes the primitive cell of the lattice $\Ga^*$ and 
the series converges in $L^2(\Pi^*)$.
\begin{lemma}\la{lT}
The operators $T_1$ and $T_2$ are bounded in  $l^2_3$ under  condition (\re{L123i}).
\end{lemma}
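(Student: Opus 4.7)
The matrix $T_2$ is a multiple of the identity on $l^2_3$: it equals $T_2(0)\cdot\mathrm{Id}$ where $T_2(0)=\langle\Phi^0,\nabla\otimes\nabla\sigma\rangle$ is a fixed $3\times 3$ matrix. To show this matrix is finite, I would integrate by parts twice using the fact that $\sigma$ decays (so no boundary terms), unfold the integral over $\R^3$ into an integral over the cell $T^3$ against $\sigma_{\rm per}(x)=\sum_n\sigma(x-n)$, and estimate
\[
|T_2(0)|=\Bigl|\int_{T^3}\nabla\otimes\nabla\Phi^0(x)\,\sigma_{\rm per}(x)\,dx\Bigr|
\le \|\nabla\otimes\nabla\Phi^0\|_{L^2(T^3)}\,\|\sigma_{\rm per}\|_{L^2(T^3)}<\infty,
\]
which is finite by $\Phi^0\in H^2(T^3)$ and (\ref{L12}).

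For $T_1$, which depends only on $n-n'$, I would use the fact that a convolution operator on $l^2_3$ is bounded if and only if its Fourier symbol $\hat T_1(\theta)$ is essentially bounded, and its operator norm coincides with $\mathrm{ess\,sup}_\theta\|\hat T_1(\theta)\|$. So the plan is to compute $\hat T_1(\theta)$ explicitly and bound it uniformly in $\theta$. Using Plancherel in $\R^3$ with the convention (\ref{Fu2}) and the identity $\widetilde{G\nabla\otimes\nabla f}(\xi)=-\frac{\xi\otimes\xi}{|\xi|^2}\ti f(\xi)$, one obtains
\[
T_1(k)=\frac1{(2\pi)^3}\int_{\R^3}\frac{\xi\otimes\xi}{|\xi|^2}|\ti\sigma(\xi)|^2 e^{-i\xi\cdot k}\,d\xi,\qquad k\in\Z^3.
\]
Applying the Poisson summation formula $\sum_{k\in\Z^3}e^{ik\cdot(\theta-\xi)}=(2\pi)^3\sum_{m\in\Z^3}\delta(\xi-\theta+2\pi m)$ gives
\[
\hat T_1(\theta)=\sum_{m\in\Z^3}\Bigl[\frac{\xi\otimes\xi}{|\xi|^2}|\ti\sigma(\xi)|^2\Bigr]_{\xi=\theta-2\pi m},
\]
which is precisely (up to the sign convention in $\theta$) the matrix $\Sigma(\theta)$ of (\ref{W1}).

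To conclude boundedness it remains to check that this sum is uniformly bounded for a.e.\ $\theta\in\Pi^*$. The key point is that the rank-one projector $\frac{\xi\otimes\xi}{|\xi|^2}$ has operator norm $\le 1$ wherever it is defined, so each summand is bounded in norm by $|\ti\sigma(\theta-2\pi m)|^2$. By (\ref{L123}) we have $|\ti\sigma(\xi)|\le C\langle\xi\rangle^{-2}$, hence
\[
\|\hat T_1(\theta)\|\le\sum_{m\in\Z^3}|\ti\sigma(\theta-2\pi m)|^2
\le C^2\sum_{m\in\Z^3}\langle\theta-2\pi m\rangle^{-4},
\]
and the last series is uniformly bounded in $\theta$. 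The only delicate point is the set $\theta\in\Gamma^*=2\pi\Z^3$, where one term has $\xi=0$ and the factor $\xi\otimes\xi/|\xi|^2$ is not pointwise defined; this is a measure-zero set and does not affect the essential supremum, which is all we need. Thus $\hat T_1\in L^\infty(\Pi^*;\C^{3\times 3})$ and $T_1$ is bounded on $l^2_3$ by Plancherel. The main (minor) obstacle is keeping track of Fourier normalization conventions and justifying Poisson summation, which is legitimate because the decay $\ti\sigma(\xi)=O(\langle\xi\rangle^{-2})$ provides absolute convergence of both sides.
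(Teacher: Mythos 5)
Your treatment of $T_1$ is essentially identical to the paper's: pass to the Fourier symbol $\hat T_1(\theta)$, identify it via Poisson summation with $\Si(\theta)$, bound each summand by $|\ti\si(2\pi m-\theta)|^2\le C\langle m\rangle^{-4}$ using $\Vert\xi\otimes\xi/|\xi|^2\Vert\le 1$ and the decay from (\re{L123}), and conclude boundedness by Parseval. For $T_2$ you differ only in a harmless detail: the paper bounds the constant matrix $\langle\Phi^0,\na\otimes\na\si\rangle$ directly, using that $\Phi^0\in H^2(T^3)$ is bounded together with (\re{L123i}), whereas you integrate by parts onto $\Phi^0$, periodize, and apply Cauchy--Schwarz with $\si_{\rm per}\in L^2(T^3)$; both routes are legitimate (yours needs the decay $\langle x\rangle^2\si\in L^2(\R^3)$ to justify the periodization and the vanishing boundary terms, at the same level of rigor as the paper).
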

\Pr
The first operator $T_1$ reads as the convolution: 
$T_1 Q(n)=\sum T_1(n-n')Q(n')$,
where
\be\la{Kn}
T_1(n)=-\langle \na\otimes G\na\si(x),   \si(x-n) \rangle.
\ee
In the Fourier transform (\re{Fu}), the convolution operator $T_1$ becomes the 
multiplication,
\be\la{K22}
\widehat{T_1 Q}(\theta)=\hat T_1(\theta) \hat Q(\theta  ),\qquad ~~{\rm a.e.}~~\theta\in\Pi^*\setminus\Ga^*.
\ee
By the Parseval identity, it suffices to check that
the ``symbol" $\hat T_1(\theta)$ is a bounded function. This follows by direct 
calculation from (\re{T}). First, we apply the Parseval identity: 
\beqn\la{K3}
\hat T_1(\theta)\!\!&\!=\!&\!\!
-\sum_n e^{in\theta}\langle  \na\otimes G\na\si(x),\si(x-n)\rangle
=
\fr1{(2\pi)^3}\sum_n e^{in\theta}\langle  \fr{\xi\otimes\xi}{|\xi|^2}
\ti\si(\xi),\ti\si(\xi)e^{in\xi}\rangle
\nonumber\\
\nonumber\\
\!\!&\!=\!&\!\!
\fr 1{(2\pi)^{3}}\langle 
\fr{\xi\otimes\xi}{|\xi|^2}\ti\si(\xi), \ti\si(\xi)
\sum_n
e^{in(\theta+\xi)}\rangle
=
\sum_m
\Big[\fr{\xi\otimes\xi}{|\xi|^2}|\ti\si(\xi)|^2\Big]_{\xi=2\pi m-\theta}~,
\quad\theta\in\Pi^*\setminus\Ga^* 
\eeqn
since the  sum over $n$ equals $\ds|\Pi^*|\sum_m \de(\theta+\xi-2\pi m)$
by the Poisson summation formula \ci{Her1}. Finally, $|\ti\si(\xi)|\le C\langle 
\xi\rangle^{-2}$ by (\re{L123}). Hence,
\be\la{K4}
|\hat T_1(\theta)|\le C_1\sum_m|\ti\si(2\pi m-\theta)
\ti\si(2\pi m-\theta)|\le C_2\sum_m\langle m\rangle^{-4}<\infty.
\ee
ii)
Finally, 
\be\la{K32}
\widehat{T_2Q}(\theta)=\hat T_2\hat Q(\theta),\qquad\theta\in\Pi^*,
\ee
where
\be\la{K33}
\hat T_2=\langle \Phi^0(x), \na\otimes\na\si(x)\rangle
\ee
by  (\re{LPS20}). 
The expression 
is finite  by (\re{L123i}), since $\Phi^0\in H^2(T^3)$
is a bounded periodic function.\bo


\setcounter{equation}{0}
\section{The Hamilton structure and the domain}
To construct solutions of the system (\re{JDi}), we need to diagonalize its generator $A$.
The main problem is that this generator is neither selfadjoint and even not symmetric, 
so we cannot apply the von Neumann 
spectral theorem. We will 
solve this problem by applying our spectral theory of  Hamilton operators 
with positive energy \ci{KK2014a,KK2014b} to the Bloch representation of $A$.

In this section 
we study  the domain  of the generator $A$.  Denote
\be\la{cV}
 \cV:=H^1(\R^3)\oplus H^1(\R^3) \oplus l^2_{3}\oplus l^2_{3},\qquad l^2_{3}:=l^2(\Z^3)\otimes \C^{3}.
\ee
It is easy to check that
the Hamilton representation (\re{AJB}) formally holds
with the   symplectic matrix  
\be\la{J}
J=\left(\ba{cccc}   
0  & \fr12 &  0 & 0\\
-\fr12 & 0 &  0 & 0\\
0 & 0 &  0 & 1\\
0 & 0 & -1 & 0
\ea\right).
\ee
\begin{definition}\la{dD}
i) $\cS_+:= \cup_{\ve>0} \cS_\ve$, where $\cS_\ve$  is the space of functions 
$\vp\in\cS(\R^3)$, whose Fourier transforms  $\hat\vp(\xi)$ vanish in the $\ve$-neighborhood of the lattice $\Ga^*$,
\medskip\\
ii) $l_c=\cup_{R\in\N}l_c(R)$, where $l_c(R):=\{Q\in l^2_3: Q(n)=0 ~~ {\rm for} ~~ |n|>R\}$.
\medskip\\
iii)  
$  \cD:=\{Y=(\Psi_1,\Psi_2,Q,P)\in\cX:  \Psi_1,\Psi_2\in \cS_+,~~~ Q\in l_c,~~~ P\in l_c \}.$
\end{definition}
Obviously, $\cD$ is dense in $\cX$.

\begin{theorem}\la{tHam}
Let conditions (\re{L123i}) hold. Then
$B$ is a symmetric operator on 
the domain $\cD\subset\cX$.
\end{theorem}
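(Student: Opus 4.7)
I would split the proof into two parts: (a) show $B\cD\subset\cX$, so the symmetry identity even makes sense; (b) verify $\langle BY,Y'\rangle_\cX=\langle Y,BY'\rangle_\cX$ for $Y,Y'\in\cD$. The Schr\"odinger block $H^0=-\tfrac12\Delta-e\Phi^0-\omega^0$ preserves $\cS(\R^3)$ (hence $\cS_+$, since multiplication by the smooth $\Ga$-periodic $\Phi^0$ shifts Fourier support only by vectors of $\Ga^*$) and is symmetric there by integration by parts; the matrix block $T$ is bounded on $l^2_3$ by Lemma~\ref{lT}, and its Fourier symbols in (\ref{K3}), (\ref{K33}) are manifestly real symmetric; the scalar $M^{-1}$ is trivial. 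All the real work concerns the infrared-singular entries $\psi^0 G\psi^0$, $S$, and $S^{\5*}$.

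\textbf{Step 1 (domain).} For $\Psi\in\cS_\varepsilon$, using the Fourier expansion of $\psi^0$ from (\ref{H42}), I would first observe that
\begin{equation*}
\widehat{\psi^0\Psi}(\xi)=\sum_m\check\psi^0(m)\,\hat\Psi(\xi-2\pi m)
\end{equation*}
still vanishes on the $\varepsilon$-neighborhood of $\Ga^*$, because each shift $2\pi m\in\Ga^*$ preserves the lattice. Combined with the rapid decay of $\check\psi^0$ this gives $\widehat{\psi^0\Psi}\in\cS(\R^3)$, so the multiplier $|\xi|^{-2}$ produces no singularity and $G(\psi^0\Psi)\in\cS(\R^3)$, whence $\psi^0 G\psi^0\Psi_1\in L^2(\R^3)$. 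For $Q\in l_c$ the sum $SQ=e\psi^0\sum_n[G\na\si(\cdot-n)]\cdot Q(n)$ is finite; since $\widehat{G\na\si}(\xi)=-i\xi\ti\si(\xi)/|\xi|^2$, the $1/|\xi|$ singularity at the origin is $L^2$-integrable in $\R^3$, while the decay $|\ti\si(\xi)|\le C\langle\xi\rangle^{-2}$ from (\ref{L123}) handles the high frequencies, giving $G\na\si\in L^2(\R^3)$ and $SQ\in L^2(\R^3)$. For the adjoint, $(S^{\5*}\Psi)(n)=e\int G[\psi^0\Psi](x)\,\na\si(x-n)\,dx$ is the value at $n\in\Z^3$ of the convolution $e\,G(\psi^0\Psi)*\na\si(-\cdot)$, whose Fourier transform $e\,i\xi\,\widehat{\psi^0\Psi}(\xi)\,\overline{\ti\si(\xi)}/|\xi|^2$ is Schwartz (no singularity, because $\widehat{\psi^0\Psi}$ vanishes on a neighborhood of $\Ga^*$); hence $S^{\5*}\Psi$ is the restriction of a Schwartz function to $\Z^3$ and lies in $l^2_3$.

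\textbf{Step 2 (symmetry).} Expanding the inner products block by block, symmetry reduces to four elementary identities: (i) $\langle H^0\Psi_k,\Psi_k'\rangle=\langle\Psi_k,H^0\Psi_k'\rangle$ by integration by parts on $\cS_+$; (ii) $\langle\psi^0 G\psi^0\Psi_1,\Psi_1'\rangle=\langle\Psi_1,\psi^0 G\psi^0\Psi_1'\rangle$, because Parseval puts both sides in the form $(2\pi)^{-3}\int|\xi|^{-2}\widehat{\psi^0\Psi_1}(\xi)\overline{\widehat{\psi^0\Psi_1'}(\xi)}\,d\xi$; (iii) $\langle SQ,\Psi_1'\rangle_{L^2}=\langle Q,S^{\5*}\Psi_1'\rangle_{l^2_3}$ by Fubini applied to the explicit integral/sum formulas, together with the reality of $\psi^0$ and $\si$; (iv) $T=T^{\5*}$ on $l^2_3$, because $\hat T_1(\theta)$ in (\ref{K3}) is a sum of positive multiples of the real symmetric matrices $\xi\otimes\xi$, and $\hat T_2=\langle\Phi^0,\na\otimes\na\si\rangle$ in (\ref{K33}) is symmetric in its two indices since $\pa_i\pa_j=\pa_j\pa_i$ and $\Phi^0,\si$ are real.

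\textbf{Main obstacle.} The entire subtlety lies in the design of the domain $\cD$: the operators $\psi^0 G\psi^0$ and $S,S^{\5*}$ are ill-defined on $\cS(\R^3)$ alone, because multiplication by the periodic $\psi^0$ reintroduces Fourier mass on the whole lattice $\Ga^*$, where $-\De^{-1}$ blows up. The crux is that $\hat\Psi$ vanishes on a neighborhood of \emph{all} of $\Ga^*$, not only at $\xi=0$, and this property is preserved by the $\Ga^*$-shifts coming from $\psi^0$; establishing this closure, together with the $L^2$-property of $G\na\si$ (which uses the critical interplay $|\ti\si|^2/|\xi|^2\in L^1_{\rm loc}(\R^3)$ available only in $3$D), is what makes all four symmetry identities routine Parseval/Fubini computations.
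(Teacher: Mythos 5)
Your overall route is the same as the paper's: the real content is the design of the domain $\cD$ (the key observation being that $\widehat{\psi^0\Psi}=\sum_m\check\psi^0(m)\hat\Psi(\cdot-2\pi m)$ still vanishes in a neighborhood of all of $\Ga^*$, so the symbol $|\xi|^{-2}$ causes no infrared trouble), then the membership statements $\psi^0G\psi^0\Psi\in L^2(\R^3)$, $SQ\in L^2(\R^3)$, $S^{\5*}\Psi\in l^2_3$, and finally symmetry by routine Parseval/Fubini identities; this is exactly the structure of the paper's Lemma \re{lDB} plus the remark that the matrix (\re{AJB}) is formally symmetric. Your argument for $SQ$ (namely $G\na\si\in L^2(\R^3)$ from $|\ti\si(\xi)|\le C\langle\xi\rangle^{-2}$, a finite sum over $n$, and boundedness of $\psi^0$) is even a bit more direct than the paper's Parseval computation.

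There is, however, one step whose justification uses regularity that the hypotheses do not provide. You assert that $\widehat{\psi^0\Psi}$ is Schwartz (``rapid decay of $\check\psi^0$'') and, more importantly, that $e\,i\xi\,\widehat{\psi^0\Psi}(\xi)\,\overline{\ti\si(\xi)}/|\xi|^2$ is Schwartz, so that $S^{\5*}\Psi$ is the restriction of a Schwartz function to $\Z^3$. Neither claim follows from (\re{L123i}) and (\re{H42}): the ground state only satisfies $\sum_m\langle m\rangle^8|\check\psi^0(m)|^2<\infty$, and $\si$ only gives $\pa^\al\ti\si\in L^2(\R^3)$ for $|\al|\le 2$ together with $\langle\xi\rangle^2\ti\si$ bounded, by (\re{L123}); in particular $\ti\si$ need not be smooth, so the product cannot be Schwartz and pointwise rapid decay of $(S^{\5*}\Psi)(n)$ is not automatic. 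The conclusion is still correct, and the repair is exactly the paper's: write $(S^{\5*}\Psi)(n)$ as the oscillatory integral (\re{conv2}) and integrate by parts twice in $\xi$ --- this uses precisely the two $L^2$ derivatives of $\ti\si$ and the polynomially weighted smoothness of $\ti\psi^0*\ti\Psi$ --- to obtain $|(S^{\5*}\Psi)(n)|\le C\langle n\rangle^{-2}$ as in (\re{conv4}), which is square-summable over $\Z^3$ in dimension three, hence $S^{\5*}\Psi\in l^2_3$. A similar but harmless overstatement is your parenthetical claim that $H^0$ preserves $\cS_+$: $\Phi^0\in H^2(T^3)$ is not smooth, but nothing of the sort is needed, since for the symmetry identity it suffices that $H^0\Psi\in L^2(\R^3)$, which holds because $\Phi^0$ is bounded.
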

\Pr
Formally the matrix  (\re{AJB}) is symmetric. 
The following lemma implies that 
$B$ is defined on $\cD$.
\begin{lemma}\la{lDB}
i)  
$\psi^0 G\psi^0 \vp\in L^2(\R^3)$  and $S^*\vp\in l^2_3$  for 
$\vp\in\cS_+$.
\medskip\\
ii) $S Q\in L^2(\R^3)$ for $Q\in l^c$.
\end{lemma}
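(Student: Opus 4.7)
The plan is to reduce both claims to Plancherel and Poisson-summation estimates, after observing that the definition of $\cS_+$ is designed precisely to cancel the infrared singularity of $G=-\De^{-1}$ at every point of the dual lattice $\Ga^*$. Part (ii) is almost immediate: since $Q\in l_c$ has finite support, $SQ$ is a finite linear combination of translates of $\psi^0\,G\na\si$, and because $\psi^0\in H^4(T^3)\subset L^\infty$ by (\re{H4}), it suffices to show $G\na\si\in L^2(\R^3)$. By Plancherel this is equivalent to
$$\int_{\R^3}\fr{|\ti\si(\xi)|^2}{|\xi|^2}\,d\xi<\infty,$$
which holds because $\ti\si$ is bounded (so the integrand is $O(|\xi|^{-2})$ near the origin, integrable in $\R^3$) and $|\ti\si(\xi)|\le C\langle\xi\rangle^{-2}$ at infinity by (\re{L123}).

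For part (i), the key preliminary observation is that for $\vp\in\cS_\ve$ the Fourier transform $\ti{\psi^0\vp}$ vanishes on the $\ve$-neighborhood of $\Ga^*$. Indeed, by (\re{H42}) one has
$$\ti{\psi^0\vp}(\xi)=\sum_{k\in\Z^3}\check\psi^0(k)\,\ti\vp(\xi+2\pi k),$$
and each summand vanishes whenever $\xi+2\pi k$ lies in the $\ve$-neighborhood of $\Ga^*$, i.e.\ whenever $\xi$ itself does. This cancellation kills the $|\xi|^{-2}$ singularity of $G$ at every lattice point: $\ti{G\psi^0\vp}(\xi)=\ti{\psi^0\vp}(\xi)/|\xi|^2$ vanishes on the $\ve$-neighborhood of $\Ga^*$, the multiplier $|\xi|^{-2}$ is bounded on the complement, and $\ti{\psi^0\vp}$ inherits rapid decay at infinity from $\vp\in\cS$ together with the smoothness of $\psi^0$. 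Hence $G\psi^0\vp\in L^2(\R^3)$, and multiplication by the bounded periodic factor $\psi^0$ gives the first claim.

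For $S^*\vp\in l^2_3$, I take the Bloch transform in $n$ and apply Poisson summation exactly in the spirit of the calculation leading to (\re{K3}). Starting from $(S^*\vp)(n)=e\langle\psi^0\vp,G\na\si(\cdot-n)\rangle$ one obtains, for a.e.\ $\theta\in\Pi^*\setminus\Ga^*$, an expression of the form
$$\widehat{S^*\vp}(\theta)=-ie\sum_{m\in\Z^3}\fr{\eta_m}{|\eta_m|^2}\,\ti\si(\eta_m)\,\ti{\psi^0\vp}(-\eta_m),\qquad \eta_m:=2\pi m-\theta.$$
For $\theta$ outside the $\ve$-neighborhood of $\Ga^*$ every $\eta_m$ stays at distance at least $\ve$ from $\Ga^*$, so $|\eta_m|^{-1}\le 1/\ve$; combined with $|\ti\si(\xi)|\le C\langle\xi\rangle^{-2}$ from (\re{L123}) and the analogous decay of $\ti{\psi^0\vp}$ at infinity, the series converges absolutely and uniformly. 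Thus $\widehat{S^*\vp}\in L^\infty(\Pi^*)\subset L^2(\Pi^*)$, and Parseval in the Bloch variable yields $S^*\vp\in l^2_3$. The only real subtlety is the infrared cancellation at the points of $\Ga^*$; once this is secured by the support condition defining $\cS_+$, everything else is routine Plancherel bookkeeping.
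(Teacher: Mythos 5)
Your proposal is correct, and for the $S^*\vp$ half it takes a genuinely different route from the paper. For part (ii) and the claim $\psi^0G\psi^0\vp\in L^2(\R^3)$ you essentially reproduce the paper's argument (your part (ii) is in fact a little cleaner: you use boundedness of $\psi^0$ directly instead of expanding $\psi^0$ into Fourier coefficients as in (\re{SF})--(\re{SF3}), but the decisive integral $\int|\ti\si(\xi)|^2|\xi|^{-2}d\xi<\infty$ is the same). For $S^*\vp\in l^2_3$ the paper stays on the lattice side: starting from (\re{conv2}) it integrates by parts twice in $\xi$, using $\pa^\al\ti\si\in L^2$ for $|\al|\le2$ (i.e.\ the hypothesis $\langle x\rangle^2\si\in L^2$), to get the pointwise bound $|[S^*\vp](n)|\le C\langle n\rangle^{-2}$ and hence square-summability. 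You instead pass to the discrete Fourier/Bloch variable, apply Poisson summation in the spirit of (\re{K3}), and bound the resulting symbol uniformly, using only the decay $|\ti\si(\xi)|\le C\langle\xi\rangle^{-2}$ together with the infrared cancellation built into $\cS_+$; Parseval then gives the $l^2$ claim. Your route needs fewer hypotheses on $\si$ at this step and connects naturally with the later formulas for $\hat S(\theta)$, while the paper's route yields the stronger pointwise decay in $n$. Two small points you should tighten. First, your uniform bound on the symbol is stated only for $\theta$ at distance $\ge\ve$ from $\Ga^*$, yet you conclude $\widehat{S^*\vp}\in L^\infty(\Pi^*)$; for $\theta$ in the $\ve$-neighborhood of $\Ga^*$ every argument $-\eta_m=\theta-2\pi m$ also lies in that neighborhood, so every term vanishes by the support property of $\widetilde{\psi^0\vp}$ that you already established -- say this explicitly, since otherwise the passage to $L^2(\Pi^*)$ and Parseval is not justified on a set of positive measure. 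Second, "rapid decay" of $\widetilde{\psi^0\vp}$ is an overstatement: $\psi^0$ is only $C^2$ by (\re{H4}), so what you actually get (and all you need, both for $L^2$ membership and for absolute convergence of the $m$-series against $|\ti\si(\eta_m)|\le C\langle\eta_m\rangle^{-2}$ and $|\eta_m|^{-1}\le\ve^{-1}$) is decay of order $\langle\xi\rangle^{-2}$, e.g.\ from $\De(\psi^0\vp)\in L^1$.
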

\Pr i)
First, note that
\be\la{Qp}
 G\psi^0\vp=F^{-1}\fr{[\ti\psi^0*\ti\vp](\xi)}{|\xi|^2}.
\ee
Further,  $\ti\psi^0(\xi)=(2\pi)^3\sum_{m\in\Z^3} \check\psi^0(m)\de(\xi-2\pi m)$. 
Respectively,
\be\la{conv}
 [\ti\psi^0*\ti\vp](\xi)= (2\pi)^3\sum _{m\in \Z^3}
 \check\psi^0(m)\hat\vp(\xi-2\pi m)=0,
 \qquad |\xi|<\ve
\ee
if $\vp\in\cS_\ve$ with some $\ve>0$. Moreover, 
$\ti\psi^0*\ti\vp\in L^2(\R^3)$, since $\psi^0\vp\in L^2(\R^3)$.
Hence,  $\vp$ belongs to the domain of $G\psi^0$ and of $\psi^0G\psi^0$.
\medskip\\
Now consider $S^*\vp$. Applying (\re{S}), the Parseval identity and  (\re{conv}) we get
for $\vp\in\cS_\ve$
\beqn\la{conv2}
[S^*\vp](n)&=&e\int \psi^0(x)\vp(x)G\na\si(x-n)dx
=e\langle\psi^0(x)\vp(x),G\na\si(x-n)\rangle
\nonumber\\
\nonumber\\
&=&\frac{ie}{(2\pi)^{3}}\int_{|\xi|>\ve}[\ti\psi^0*\ti\vp](\xi)
\fr{\xi\ov{\ti\si}(\xi)e^{-in\xi}}{|\xi|^2}d\xi.
\eeqn
Here 
$\pa^\al[\ti\psi^0*\ti\vp](\xi)\langle\xi\rangle^4\in L^2(\R^3)$ 
for all $\al$ by  (\re{H42}), since $\ti\vp\in\cS(\R^3)$.
Moreover, $\pa^\al\ti\si\in L^2(\R^3)$ for $|\al|\le 2$ by (\re{L123}).
Hence, integrating by parts twice, and taking into account (\re{conv}), we obtain 
\be\la{conv4}
 |[S^*\vp](n)|\le C\langle n\rangle^{-2},
\ee
which implies that $S^*\vp\in l^2_3$.
\medskip\\
ii) Let us  check that $SQ\in L^2(\R^3)$ for $Q\in l_c$. 
The Fourier transform of $SQ$ reads as
\beqn\la{SF}
\widetilde{SQ}(\xi)&=&
e F_{x\to\xi}\sum_n \psi^0(x)G\na\si(x-n)Q(n)
=e\sum_n \ti\psi^0*F_{x\to\xi}[G\na\si(x-n)]Q(n)
\nonumber\\
\nonumber\\
&=&e(2\pi)^3
\int \sum_m \check\psi^0(m) \de(\eta-2\pi m) \widetilde{G\na\si}(\xi-\eta) 
\sum_{n}e^{in(\xi-\eta)}Q(n)d\eta
\nonumber\\
\nonumber\\
&=&e(2\pi)^3
\sum_m \check\psi^0(m)  \widetilde{G\na\si}(\xi-2\pi m) 
\ti Q(\xi-2\pi m).
\eeqn
Hence, the Parseval identity gives that
\be\la{SF2}
\Vert SQ\Vert_{L^2(\R^3)}=C\Vert \widetilde{SQ}\Vert_{L^2(\R^3)}
\le C_1\sum_m |\check\psi^0(m)|  
\Vert\widetilde{G\na\si}(\xi)\ti Q(\xi)\Vert _{L^2(\R^3)}
\ee
It remains to note that the sum over $m$ is finite by  (\re{H42})
because
\be\la{SF3}
\Vert\widetilde{G\na\si} \ti Q\Vert_{L^2(\R^3)}^2=
\int \fr{1}{|\xi|^2}|\ti\si(\xi) \ti Q(\xi)|^2d\xi
\le C(Q)\int \fr{|\ti\si(\xi)|^2}{|\xi|^2}d\xi
\ee
since the function $\ti Q(\xi)$ is bounded for $Q\in l_c$.
Finally, the last integral is finite by  (\re{L123}).
\bo
\medskip\\
This lemma implies that $BY\in\cX$ for $Y\in \cD$.
The symmetry of $B$ on $\cD$ is evident from (\re{AJB}).
Theorem \re{tHam} is proved. \bo
\begin{remark}\la{run}
The infrared singularity at $\xi=0$ of the integrands (\re{Qp}), (\re{conv2}) and (\re{SF3}) 
demonstrates that all operators
$G\psi^0:L^2(\R^3)\to L^2(\R^3)$, $S:l^2_3\to L^2(\R^3)$ and 
$S^*:L^2(\R^3)\to l^2_3$ are unbounded.
\end{remark}
\bc\la{cAA*}
The proof of 
Theorem \re{tHam} shows that the operator $A$ is defined on $\cD$, as well as 
the "formal adjoint" $A^*$, which is  defined by the identity
\be\la{A*}
\langle AY_1,Y_2\rangle=\langle Y_1,A^*Y_2\rangle,\qquad Y_1,Y_2\in \cD.
\ee
\ec

\setcounter{equation}{0}
\section{Factorization of energy  and bound from below}

The  equation (\re{JDi})  is
formally  a Hamiltonian system with  Hamilton functional 
$\fr12\langle Y,B Y \rangle$.
Next theorem means the stability  property of the linearized crystal. 
\begin{theorem}\la{tpos}
Let conditions (\re{L123i}) hold. Then the operator $B$ on the domain $\cD$ is bounded from below:
\be\la{bb}
 \langle Y,BY \rangle\ge -C\Vert Y\Vert^2_{\cX},~~~~~~~Y\in \cD.
\ee
\end{theorem}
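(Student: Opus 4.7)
The strategy is to rewrite $\langle Y, BY\rangle$ in a form where the nonnegative contributions are visible and the remaining terms are manifestly bounded. Starting from the explicit matrix (1.16), one expands
\begin{equation*}
\langle Y, BY\rangle = 2\langle\Psi_1, H^0\Psi_1\rangle + 2\langle\Psi_2, H^0\Psi_2\rangle + 4e^2\langle\Psi_1, \psi^0 G\psi^0\Psi_1\rangle + 4\,{\rm Re}\langle Q, S^*\Psi_1\rangle + \langle Q, TQ\rangle + M^{-1}\|P\|^2,
\end{equation*}
using symmetry of $B$ to combine the off-diagonal $S$ and $S^*$ contributions.

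The key step is to recognize a factorization driven by the linearized density $\rho_1$ from (\ref{ro1i}). Expanding $\langle\rho_1, G\rho_1\rangle$ with $\rho_1 = -\sum_n \nabla\sigma(x-n)\cdot Q(n) - 2e\psi^0\Psi_1$ and integrating by parts twice in the pure ion--ion term (using that $\partial_i G = G\partial_i$) gives the algebraic identity
\begin{equation*}
\langle\rho_1, G\rho_1\rangle = \langle Q, T_1 Q\rangle + 4\langle Q, S^*\Psi_1\rangle + 4e^2\langle\Psi_1, \psi^0 G\psi^0\Psi_1\rangle.
\end{equation*}
Substituting this into the expression for $\langle Y, BY\rangle$ and using $T = T_1 + T_2$ yields the clean factorization
\begin{equation*}
\langle Y, BY\rangle = 2\langle\Psi_1, H^0\Psi_1\rangle + 2\langle\Psi_2, H^0\Psi_2\rangle + \langle\rho_1, G\rho_1\rangle + \langle Q, T_2 Q\rangle + M^{-1}\|P\|^2.
\end{equation*}
From here the bound is immediate: $\langle\rho_1, G\rho_1\rangle \geq 0$ and $M^{-1}\|P\|^2 \geq 0$; the Schr\"odinger-type operator $H^0 = -\tfrac12\Delta - e\Phi^0 - \omega^0$ satisfies $H^0 \geq -\|e\Phi^0\|_{L^\infty} - \omega^0 \geq -C_1$ since $-\Delta \geq 0$ and $\Phi^0 \in H^2(T^3) \subset L^\infty$; and $\langle Q, T_2 Q\rangle \geq -\|T_2\|\,\|Q\|^2$ by Lemma \ref{lT}. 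Together these yield $\langle Y, BY\rangle \geq -C\|Y\|^2_{\cX}$.

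The main obstacle is justifying the formal manipulations in the presence of the infrared singularities of $G\psi^0$ and $S$ flagged in Remark \ref{run}. The domain $\cD$ is tailored for this: for $Y\in\cD$ the sums over $n$ in $\rho_1$ are finite, and $\widetilde{\rho_1}(\xi)$ vanishes in a neighborhood of $\xi=0$. Indeed, $\int\nabla\sigma(x-n)\,dx = 0$ kills the ionic contribution at the origin, while $\widetilde{\psi^0\Psi_1} = (2\pi)^{-3}\widetilde{\psi^0}*\widetilde{\Psi_1}$ vanishes near $0$ because $\widetilde{\psi^0}$ is supported on $\Gamma^*$ and $\widetilde{\Psi_1}$ vanishes near $\Gamma^*$ (the defining property of $\cS_+$, used already in the proof of Lemma \ref{lDB}). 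Consequently $\langle\rho_1, G\rho_1\rangle = (2\pi)^{-3}\int |\xi|^{-2}|\widetilde{\rho_1}(\xi)|^2\,d\xi$ is a well-defined nonnegative number, and the integrations by parts leading to the central identity are performed on absolutely convergent integrals, making the whole factorization rigorous on $\cD$.
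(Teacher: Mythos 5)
Your proof is correct and takes essentially the same route as the paper: your central identity $\langle\rho_1,G\rho_1\rangle=\langle Q,T_1Q\rangle+4\,\rRe\langle Q,S^{\5*}\Psi_1\rangle+4e^2\langle\Psi_1,\psi^0G\psi^0\Psi_1\rangle$ is exactly the paper's perfect-square factorization (\re{bb32})--(\re{bb33}) with $f=e\sqrt{G}\,\psi^0$ and $g=\sqrt{G}\,\na\si$, since $2f\Psi_1+gQ=-\sqrt{G}\,\rho_1$, and the remaining terms ($H^0$ bounded below, $T_2$ bounded by Lemma \re{lT}, $M^{-1}\ge 0$) are treated as in the paper. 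One cosmetic caveat: $\ti\rho_1(\xi)$ does not vanish in a whole neighborhood of $\xi=0$ (the ionic part $\xi\ti\si(\xi)$ vanishes only at the point $\xi=0$), but this does not matter because $|\xi|^{-2}|\xi\ti\si(\xi)|^2$ is bounded, so $\langle\rho_1,G\rho_1\rangle$ is still finite on $\cD$ and your argument stands.
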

\Pr 
For  $Y=(\Psi_1,\Psi_2,Q,P)\in \cD$  the quadratic form reads (with the notations 
(\re{S})--(\re{T}))
\beqn\la{bb2}
\!\!\!\!\langle Y,BY \rangle\!\!&\!\!=\!\!&\!\!2\sum_j\langle \Psi_j,H^0\Psi_j \rangle
\!+\!4e^2\langle \psi^0\Psi_1,G\psi^0\Psi_1\rangle
\!+\!2[\langle   \Psi_1, S Q \rangle+\langle   Q, S^{\5*} \Psi_1 \rangle]
\!+\!\langle Q,T_1Q\rangle
\nonumber\\
\nonumber\\
&\!\!\!&\!\!\!+\langle Q,T_2Q\rangle
\!+\!\langle P,M^{-1}P\rangle.
\eeqn
Here the first sum is bounded from below,
the operator $T_2$ is bounded in $l^2_3$ by Lemma \re{lT}, while the operator $M^{-1}$ is positive.  
Our basic observation is that 
\be\la{bb3}
 \beta(\Psi_1,Q):=
 4e^2\langle \psi^0\Psi_1,G\psi^0\Psi_1\rangle
+2[\langle   \Psi_1, S Q \rangle+\langle   Q, S^{\5*} \Psi_1 \rangle]
+\langle Q,T_1Q\rangle\ge 0.
\ee
Indeed, the operators factorize as follows:
\be\la{bb32}
 e^2\psi^0 G\psi^0=f^*f, \qquad S=f^*g, \qquad T_1= g^*g, 
\ee
where 
\be\la{fg}
f:=e\sqrt{G}\psi^0,\qquad  g(x,n)={\na}{\sqrt{G}}\si(x-n).
\ee
Then the quadratic form   (\re{bb3}) becomes the "perfect square" 
\be\la{bb33}
~~~~~~~~~~~~~~~~~~~~~~~~~~~~  \beta(\Psi,Q)=
 \langle 2f \Psi_1 +gQ, 
 2f  \Psi_1 +gQ \rangle\ge 0.~~~~~~~~~~~~~~~~~~~~~~~~~~~~~~~~~~~~~~~~~~~~~~~~~~~~~~~~~\bo
\ee
\bc
The operator $B$ admits  selfadjoint extensions by the Friedrichs Theorem \ci{RS2}.
\ec

\setcounter{equation}{0}
\section{Generator in  the Fourier--Bloch transform}
We reduce the operators  $A=JB$ and $K$ by
the  Fourier--Bloch--Gelfand--Zak transform \ci{DK2005,PST}.

\subsection{The discrete Fourier transform}
Let us consider a vector 
$  Y=(\Psi_1, \Psi_2, Q, P)\in \cX$, and denote 
\be\la{Yn2}
 Y(n)=(\Psi_1(n,\cdot),\Psi_2(n,\cdot), Q(n),P(n))~,~~~~~~n\in\Z^3,
\ee
where 
\be\la{YP}
\Psi_j(n,y)=
\left\{
\ba{ll}
\Psi_j(n+y),&~~{\rm a.e.}~~y\in \Pi,\\
0,&y\not\in \Pi.
\ea
\right.
\ee
Obviously, $Y(n)$ with different $n\in\Z^3$ are orthogonal vectors in $\cX$, and 
\be\la{ort}
Y=\sum_n Y(n),
\ee
where the sum converges in $\cX$. The norms in $\cX$ and $\cV$ can be represented as
\be\la{nV}
 \Vert Y\Vert_\cX^2=\sum_{n\in\Z^3}\Vert Y(n)\Vert_{\cX(\Pi)}^2,\quad\quad
 \Vert Y\Vert_\cV^2=\sum_{n\in\Z^3}\Vert Y(n)\Vert_{\cV(\Pi)}^2,
\ee
where 
\be\la{KV}
 \cX(\Pi):=L^2(\Pi)\oplus L^2(\Pi) \oplus \C^3\oplus \C^3,\quad\quad
 \cV(\Pi):=H^1(\Pi)\oplus H^1(\Pi) \oplus \C^3\oplus \C^3.
\ee
Further,
the ground state (\re{gr}) is invariant with respect to translations of the lattice $\Ga$, 
and hence the operator  $A$ commutes with these translations. Namely, (\ref{S}) implies that
\be\la{S2}
 S (x,n)=S(x-n,0),  
\ee
since 
$\psi^0(x)$ is a $\Ga$-periodic function. Similarly,  (\ref{T})  implies
that $T$ commutes with  translations of $\Ga$.
Hence,
$A$ can be reduced by the discrete Fourier transform.
Namely,  applying the Fourier transform
$F_{n\to\theta}$ to the function $Y(\cdot)$ from (\re{Yn2}), we obtain
\be\la{F}
 \hat Y(\theta)=F_{n\to \theta}Y(n):=\sum\limits_{n\in \Z^3}e^{in\theta}Y(n)
 =(\hat{\Psi}_1(\theta,\cdot),\hat{\Psi}_2(\theta,\cdot),\hat {Q}(\theta),\hat{P}(\theta)),
 \quad~~{\rm a.e.}~~\theta\in \R^3,
\ee
where
\be\la{FPsi}
 \hat \Psi_j(\theta,y)=\sum\limits_{n\in \Z^3}e^{in\theta}\Psi_j(n+y),
 \quad~~{\rm a.e.}~~\theta\in \R^3,\quad ~~{\rm a.e.}~~y\in\R^3.
\ee
The function $\hat Y(\theta)$ is $\Ga^*$-periodic in $\theta$.
The series (\re{F}) converges in $L^2(\Pi^*,\cX(\Pi))$, since the series
(\re{ort}) converges in $\cX$.
The inversion formula  is given by 
\be\la{FI}
  Y(n)=|\Pi^*|^{-1}\int_{\Pi^*}e^{-in\theta}\hat Y(\theta)d\theta
\ee
(cf. (\re{Fu})).
The Parseval--Plancherel identity gives  
\be\la{PP}
 \Vert Y\Vert_\cV^2=|\Pi^*|^{-1}\Vert\hat Y\Vert_{L^2(\Pi^*,\cV(\Pi))}^2,
\qquad
 \Vert Y\Vert_\cX^2=|\Pi^*|^{-1}\Vert\hat Y\Vert_{L^2(\Pi^*,\cX(\Pi))}^2.
\ee
The functions $\hat {\Psi}_j(\theta,y)$ are  $\Ga$-quasiperiodic in $y$; i.e.,
\be\la{qp}
 \hat{\Psi}_j(\theta,y+m)=e^{-im\theta}\hat{\Psi}_j
(\theta,y), \qquad m\in\Z^3.
\ee

\subsection{Generator in the discrete Fourier transform}
Let us consider $Y\in\cD$ and calculate 
the Fourier transform (\ref{F}) for $AY$.  
Using  (\ref{T}), (\ref{conv2}), (\ref{S2}), and taking into account the $\Ga$-periodicity of $\Phi^0(x)$ and  $\psi^0(x)$,
we obtain that
\be\la{CPFh}
 \widehat{AY}(\theta)=\hat A(\theta){\hat Y}(\theta),\qquad
~~{\rm a.e.}~~\theta\in\R^3\setminus \Ga^*,
\ee
where $\hat A(\theta)$ is a  $\Ga^*$-periodic operator function,
\be\la{tiAh}
 \hat A(\theta)=\left(\!\ba{cccl}
 0 & H^0 & 0 &  0\medskip\\
 -H^0-2e^2\psi^0\hat G(\theta)\psi^0  & 0  &
 \hat S(\theta) &  0\\
 0  &  0 &  0 & M^{-1}\\
 -2\hat S^{\5*}(\theta)&0  &-\hat T(\theta)&0\\
\ea\!\right).
\ee
by (\re{JDi}) and (\re{AJB}).
Here 
\be\la{qp4}
\hat G(\theta)\hat\vp(\theta,y)=\sum_m\fr{\check\vp(\theta,m)}{(2\pi m-\theta)^2}
e^{i(2\pi m-\theta)y},\qquad~~{\rm a.e.}~~
\theta\in\R^3\setminus \Ga^*.
\ee
This expression  is well-defined for $\vp(x)=\psi^0(x)\Psi_1(x)$ with
$\Psi_1\in\cS_\ve$ since 
\be\la{qp5}
\check\vp(\theta,m)=\ti\vp(2\pi m-\theta)=0
\quad {\rm for}\quad|2\pi m-\theta|<\ve
\ee
according to (\re{conv}).

\begin{lemma}\label{SST}
The operator $\hat S(\theta)$ acts as follows:
\begin{equation}\label{S-act}
 \hat S(\theta)\hat { Q}(\theta)
 =\hat S(\theta)\hat Q(\theta),
 \quad {\rm where}\quad \hat S(\theta)=e\psi^0\hat G(\theta)\na\hat\sigma(\theta,y).
\end{equation}
\end{lemma}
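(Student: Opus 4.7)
\medskip

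\textbf{Proof plan for Lemma \ref{SST}.} The plan is a direct computation: apply the discrete Fourier--Bloch transform (\ref{F})--(\ref{FPsi}) to $SQ$ and use two commutation properties, namely that the Bloch transform intertwines $G$ with $\hat G(\theta)$ fibrewise and commutes with the differential operator $\nabla_y$. The domain issues have already been handled in Lemma \ref{lDB} (which gives $SQ\in L^2(\R^3)$ for $Q\in l_c$), so all manipulations below are justified.

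First I would write out, from the definition (\ref{S}) of $S$,
\[
(SQ)(x)=e\3\psi^0(x)\sum_{n\in\Z^3}G\na\si(x-n)\3Q(n),
\]
and then apply the Bloch transform (\ref{FPsi}) to obtain
\[
\widehat{SQ}(\theta,y)=\sum_{k\in\Z^3}e^{ik\theta}(SQ)(k+y)
=e\3\psi^0(y)\sum_{k,n}e^{ik\theta}G\na\si(k+y-n)\3Q(n),
\]
using the $\Ga$-periodicity $\psi^0(k+y)=\psi^0(y)$ to pull $\psi^0(y)$ out. Setting $k'=k-n$ decouples the double sum:
\[
\widehat{SQ}(\theta,y)=e\3\psi^0(y)\Big(\sum_{n}e^{in\theta}Q(n)\Big)\Big(\sum_{k'}e^{ik'\theta}G\na\si(k'+y)\Big)
=e\3\psi^0(y)\3\hat Q(\theta)\3\widehat{G\na\si}(\theta,y).
\]

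Next I would identify $\widehat{G\na\si}(\theta,y)=\hat G(\theta)\na_y\hat\si(\theta,y)$. The easy half is that $\nabla$ commutes with the Bloch transform: differentiating the series (\ref{FPsi}) term by term gives $\widehat{\na\si}(\theta,y)=\na_y\hat\si(\theta,y)$. For the harder half, the operator $G=-\De^{-1}$ acts on the quasi-periodic fibre by $\hat G(\theta)=[-(\na-i\theta)^2]^{-1}$; expanding $\hat\sigma(\theta,y)=e^{-i\theta y}\sum_m\check\sigma(\theta,m)e^{i2\pi my}$ and noting that $e^{i(2\pi m-\theta)y}$ is an eigenfunction of $-\De$ with eigenvalue $(2\pi m-\theta)^2$, one recovers exactly the formula (\ref{qp4}) for $\hat G(\theta)$. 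Combining these two commutation facts yields $\widehat{G\na\si}(\theta,y)=\hat G(\theta)\na_y\hat\si(\theta,y)$, and inserting this into the previous display gives (\ref{S-act}).

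The only subtle point is the infrared singularity of $\hat G(\theta)$ at $\theta\in\Ga^*$ (cf.\ Remark \ref{run}); this is handled by restricting to $\theta\in\Pi^*\setminus\Ga^*$ in (\ref{CPFh}) and by the fact that the Fourier expansion of $\nabla\hat\sigma$ has no zero mode in the relevant sense — the hypothesis (\ref{L123i}) together with the factor $\nabla$ produces a symbol $\tilde\sigma(\xi)\xi/|\xi|^2$ that is locally $L^2$ near $\xi=0$, so the fibrewise expression (\ref{qp4}) is well defined for a.e.\ $\theta$. I do not foresee a serious obstacle — the lemma is essentially a bookkeeping statement that both $G\nabla$ and $\psi^0$ respect the lattice symmetry, and the main task is simply keeping track of the Bloch indices.
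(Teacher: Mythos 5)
Your proposal is correct and follows essentially the same route as the paper: write $SQ(y+n)$ as a discrete convolution over the lattice with $\psi^0(y)$ pulled out by $\Ga$-periodicity, apply the discrete Fourier transform (\ref{F}) so the convolution becomes multiplication by $\hat Q(\theta)$, and identify $\widehat{G\na\si}(\theta,y)$ with $\hat G(\theta)\na\hat\si(\theta,y)$ via (\ref{qp4}). Your version merely spells out the change of variables and the fibrewise action of $G$ that the paper leaves implicit, plus the a.e.\ $\theta\in\Pi^*\setminus\Ga^*$ caveat, so there is nothing to object to.
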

\begin{proof}
For $x=y+n$  equations \eqref{ro+2} and \eqref{S} imply
\beqn\nonumber
S Q(y+n)&=&e\psi^0(y+n)\sum_m G\na\sigma^0(m,y+n)Q(m)\\
\nonumber
&=&e\psi^0(y)\sum_mG\na\sigma(y+n-m)Q(m)
\eeqn
due to the $\Gamma$-periodicity of $\psi^0$. 
Applying  the Fourier transform (\re{F}), we obtain \eqref{S-act}.
\end{proof}
Furthermore, $\hat S^{\5*}(\theta)$ in (\re{tiAh}) is the corresponding adjoint operator,
and   $\hat T(\theta)$ is the operator matrix expressed by (\re{K3}).
Note that $\hat S(\theta)$, $\hat S^{\5*}(\theta)$ and  $\hat T(\theta)$
are finite dimensional operators.

\subsection{Generator in the Bloch transform}

\begin{definition}\la{FGLZ}
The Bloch transform of $Y$ is defined as
\be\la{YPi}
 \ti Y(\theta)=[\cF Y](\theta):= \cM(\theta)\hat Y(\theta):
 =(\ti{\Psi}_1(\theta,y),\ti{\Psi}_2(\theta,y),\hat{ Q}(\theta),\hat{ P}(\theta)),
 \qquad~~{\rm a.e.}~~\theta\in\R^3,
\ee
where $\ti{\Psi}_j(\theta,y)=M(\theta)\hat{\Psi}_j:=e^{i\theta y}\hat{\Psi}_j(\theta,y)$
are  $\Ga$-periodic functions in $y\in\R^3$. 
\end{definition}
Now the Parseval-Plancherel identities (\re{PP}) read
\be\la{PPt}
 \Vert Y\Vert_\cV^2= |\Pi^*|^{-1}\Vert\ti Y\Vert_{L^2(\Pi^*,\cV(T^3))}^2,
 \qquad
 \Vert Y\Vert_\cX^2 = |\Pi^*|^{-1}\Vert\ti Y\Vert_{L^2(\Pi^*,\cX(T^3))}^2.
\ee
Hence, $\cF:\cX\to  L^2(\Pi^*,\cX(T^3))$ is the isomorphism.
The inversion is given by 
\be\la{FZI}
Y(n)=|\Pi^*|^{-1}\int_{\Pi^*}e^{-in\theta}\cM(-\theta)\ti Y(\theta)d\theta,\qquad n\in\Z^3.
\ee
Finally, the  above calculations can be summarised as follows:
(\ref{CPFh}) implies that for $Y\in\cD$
\be\la{CPF}
\widetilde{AY}(\theta)=
\ti A(\theta) \ti Y(\theta),\quad~~{\rm a.e.}~~\theta\in  \Pi^*\setminus\Ga^*.
\ee
Here
\be\la{tiA}
\ti A(\theta)\!=\!\cM(\theta)\hat A(\theta)\cM(-\theta)
 \!=\!\!\left(\!\!\!
 \ba{cccl}
0 &\!\!\ti H^0(\theta) & 0  & 0\medskip\\
 -\ti H^0(\theta)-2e^2\psi^0\ti G(\theta)\psi^0&
0 &~\ti S(\theta) & 0\\
 0  & 0&\!0&\!M^{-1}\\
 -2\ti S^{\5*}(\theta)\!\!&  0 &-\hat T(\theta)&\!0\\
\ea\!\!\!\!\right),
\ee
where 
\beqn\la{tiHS}
\ti S(\theta)&:=& M(\theta)\hat S(\theta)=
e\psi^0\ti G(\theta)\na\ti\sigma^0(\theta),\\
\nonumber\\
\ti H^0(\theta)&:=&M(\theta) H^0M(-\theta)=-\fr 12(\na+i\theta)^2-e\Phi^0(x)-\om^0,\la{tiH0}\\
\nonumber\\
\la{tiH1}
\ti G(\theta)&:=&M(\theta) \hat G(\theta) M(-\theta)=(i\na-\theta)^{-2}.
\eeqn
\begin{remark}\la{rb}
The operators $\ti G(\theta):L^2(T^3)\to H^2(T^3)$ are bounded for
$\theta\in \Pi^*\setminus\Ga^*$.
\end{remark}



\begin{lemma}\la{lB} Let the condition (\re{Hpos2}) hold. Then
the operator  $\ti A(\theta)$ admits the representation 
\be\la{Has}
\ti A(\theta)=J\ti B(\theta),~~~~~~~~~~\theta\in \Pi^*\setminus\Ga^*,
\ee
where $\ti B(\theta)$ is the selfadjoint operator (\re{hess2i}) 
in $\cX(T^3)$
with the domain 
\be\la{tiD} 
\ti D:=H^2(T^3)\oplus H^2(T^3)\oplus \C^3\oplus   \C^3.
\ee
\end{lemma}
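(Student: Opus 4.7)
The proof splits into two tasks: (i) verifying the algebraic factorization $\ti A(\theta)=J\ti B(\theta)$, and (ii) proving that $\ti B(\theta)$ is selfadjoint on $\ti D$.

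For (i), the factorization is purely algebraic and follows from the corresponding identity $A=JB$ already established in (\re{AJB}). Indeed, $J$ from (\re{J}) acts only on the ``component'' index $(\Psi_1,\Psi_2,Q,P)$ and commutes with the Bloch transform $\cF$, because $\cF$ acts blockwise on each of the four components. Thus, conjugating the identity $A=JB$ by $\cF$ and comparing with the explicit forms (\re{tiA}) and (\re{hess2i}) entry by entry yields $\ti A(\theta)=J\ti B(\theta)$ on $\ti D\cap \cF\cD$. The computation uses that under $M(\theta)$ the operator $\psi^0 G\psi^0$ becomes $\psi^0\ti G(\theta)\psi^0$ (since $\psi^0$ is $\Ga$-periodic and hence commutes with $M(\theta)$), and similarly for the couplings $S$ and $T$.

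For (ii), I will realize $\ti B(\theta)$ as a bounded selfadjoint perturbation of a diagonal selfadjoint operator on $\ti D$. The ``free'' part consists of the two copies of $2\ti H^0(\theta)=-(\na+i\theta)^2-2e\Phi^0(x)-2\om^0$, the finite-dimensional block $M^{-1}$, and a free $0$ on the $Q$-block. Since $\Phi^0\in H^2(T^3)\subset L^\infty(T^3)$ (by (\re{H4}) applied to $\psi^0$ together with (\re{LPS20})), the operator $2\ti H^0(\theta)$ is selfadjoint on $H^2(T^3)$ by the Kato--Rellich theorem, as a bounded symmetric perturbation of the selfadjoint Laplace-type operator $-(\na+i\theta)^2$ on $T^3$. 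Consequently the diagonal operator $B_0(\theta):=\mathrm{diag}(2\ti H^0(\theta),\,2\ti H^0(\theta),\,0,\,M^{-1})$ is selfadjoint on $\ti D$.

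It remains to show that the off-diagonal and residual blocks form a bounded selfadjoint perturbation of $B_0(\theta)$. The coupling $\ti S(\theta)=e\psi^0\ti G(\theta)\na\ti\si^0(\theta)$ maps the finite-dimensional space $\C^3$ into $L^2(T^3)$: by Remark \re{rb}, $\ti G(\theta):L^2(T^3)\to H^2(T^3)$ is bounded for $\theta\in\Pi^*\setminus\Ga^*$, and $\na\ti\si^0(\theta)\in L^2(T^3)$ by the assumptions (\re{L123i}) on $\si$; since $\psi^0\in L^\infty(T^3)$, $\ti S(\theta)$ is a bounded finite-rank operator, so $\ti S(\theta)$ and $\ti S^{\5*}(\theta)$ are bounded. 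The correction $4e^2\psi^0\ti G(\theta)\psi^0$ is bounded and selfadjoint on $L^2(T^3)$ for the same reasons, and the matrix $\hat T(\theta)=\hat T_1(\theta)+\hat T_2(\theta)$ is Hermitian and bounded on $\C^3$ by Lemma \re{lT}. Formal symmetry of the matrix (\re{hess2i}) is manifest from the positions of $2\ti S(\theta)$ and $2\ti S^{\5*}(\theta)$ and the selfadjointness of the diagonal blocks. Hence $\ti B(\theta)-B_0(\theta)$ is bounded and symmetric, so $\ti B(\theta)$ is selfadjoint on $\ti D$ by Kato--Rellich.

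The only delicate point, and the one where the hypothesis $\theta\in\Pi^*\setminus\Ga^*$ is essential, is the boundedness of $\ti G(\theta)$: at $\theta\in\Ga^*$ the operator $(i\na-\theta)^2$ has a zero eigenvalue, and the infrared singularity (cf.\ Remark \re{run}) reappears, so the off-diagonal perturbation fails to be bounded and the argument breaks down precisely there---which is consistent with the statement being restricted to $\theta\in\Pi^*\setminus\Ga^*$. Note that the positivity hypothesis (\re{Hpos2}) is not needed for selfadjointness per se, but it is what is used afterwards to apply the spectral theory of \ci{KK2014a,KK2014b} to $\ti A(\theta)=J\ti B(\theta)$.
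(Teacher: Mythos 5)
Your proof is correct and follows essentially the same route as the paper: the factorization $\ti A(\theta)=J\ti B(\theta)$ is inherited from $A=JB$ in (\re{AJB}) via the Bloch transform, and selfadjointness of $\ti B(\theta)$ on $\ti D$ comes from the fact that every block of (\re{hess2i}) is bounded except $\ti H^0(\theta)$, which is selfadjoint on $H^2(T^3)$ --- you merely spell out the bounded symmetric perturbation (Kato--Rellich) step and the boundedness of $\ti G(\theta)$, $\ti S(\theta)$, $\hat T(\theta)$ that the paper leaves implicit. Your remark that the positivity hypothesis (\re{Hpos2}) is not actually needed for the selfadjointness itself is likewise consistent with the paper's argument.
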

\Pr 
The representation (\re{Has}) follows from  (\re{AJB}) and (\ref{AJB}).
The operator $\ti B(\theta) $
is symmetric on the domain  $\ti D$.
Moreover, operators in (\re{hess2i}) are all bounded, except for $\ti H^0(\theta)$,
which is  selfadjoint in $L^2(T^3)$ with the domain $H^2(T^3)$.
Hence, $\ti B(\theta)$ is also selfadjoint on  the domain $\ti D$.
\bo

\setcounter{equation}{0}
\section{The positivity  of energy}
Here we prove the  positivity (\re{Hpos2}) for the linearized 
dynamics (\re{JDi}) under conditions (\re{W1}) and (\re{Wai}).
It is easy to construct the corresponding examples of densities
 $\si(x)$.

\begin{example}\la{ex}
 (\re{W1}) holds for 
 $\si(x)\in L^1(\R^3)$ if
\be\la{Wex}
\ti\si(\xi)\ne 0,\qquad ~~\mbox{\rm a.e.}~~\xi\in\R^3.
\ee

\end{example}

\begin{example}\la{ex2}
Let us define  the function $f(x)$ by its Fourier transform
$\ti f(\xi):=\ds\fr{2\sin\ds\fr\xi2}{\xi}e^{-\xi^2}$, and set
\be\la{exW}
\si(x):=eZ f(x_1)f(x_2)f(x_3),\qquad x\in\R^3.
\ee
Then $\si(x)$ is the
smooth function satisfying 
 the Wiener condition
(\re{W1}), as well as (\re{Wai}) and (\re{ro+}), and
\be\la{rd}
 |\si(x)|\le C(a)e^{-a|x|},~~~~x\in\R^3,
\ee
for any $a>0$ by the Paley--Wiener theorem.
\end{example}
The matrix  (\re{W1}) is a continuous function of $\theta\in\Pi^*\setminus\Ga^*$.
Let us denote 
\be\la{Pi+}
\Pi^*_+:=\{\theta\in \Pi^*\setminus\Ga^*: \Si(\theta)>0 \}.
\ee
Then the Wiener condition  (\re{W1}) means that $|\Pi^*_+|=|\Pi^*|$.
In the rest of this paper 
we assume condition (\re{Wai})
and
consider the linearized 
dynamics (\re{JDi}) corresponding  to a real minimizer 
of energy per cell. In Appendix B we show  that the real minimizer
exists and is unique.

\bt\la{tpose}
Let  conditions (\re{L123i}), and  (\re{Wai}) hold.
Then the Wiener  condition  (\re{W1}) is necessary  
and sufficient for
the positivity (\re{Hpos2})
 of the generator
 corresponding to the real minimizer 
of energy per cell.

\et
\Pr
i) First, let us check that the Wiener condition (\re{W1}) is necessary.
Namely, let us consider the inequality   (\re{Hpos2}) for $Y_0=(0,0,Q,P)\in \cV(T^3)$:
(\re{hess2i}) and (\re{Hpos2}) imply that 
\be\la{Hpos0}
 \cE(\theta,Y_0)
=Q\hat T(\theta)Q+PM^{-1}P
\ge \vka(\theta)[|Q|^2+|P|^2],\quad~~\mbox{\rm a.e.}~~\theta\in\Pi^*\setminus\Ga^*.
\ee
Formula (\re{K33}) implies that $\hat T_2=0$ 
by (\re{ppo}). Hence, 
\be\la{TT}
\hat T(\theta)=\hat T_1(\theta)=\Si(\theta),\qquad\theta\in\Pi^*\setminus\Ga^*
\ee
by (\re{K3}).
Therefore, (\re{Hpos0}) becomes
\be\la{Hpos0b}
 \cE(\theta,Y_0)
=
Q\Si(\theta)Q+PM^{-1}P
\ge \vka(\theta)[|Q|^2+|P|^2].
\ee
Hence, the  condition  (\re{W1})
is necessary
for the positivity (\re{Hpos2}).
\medskip\\
ii) It remains to show  that the Wiener condition (\re{W1}) together with (\re{Wai})
is sufficient for the 
positivity 
(\re{Hpos2}).
Let us
 translate the calculations (\re{bb2})--(\re{fg})
into the Fourier--Bloch transform. The operators (\re{fg}) commute with the $\Ga$-translations, 
and therefore
\be\la{fact}
e^2\psi^0\ti G(\theta)\psi^0=\ti f^*(\theta)\ti f(\theta),
\quad\ti S(\theta)=\ti f^*(\theta)\ti g(\theta),
\quad\hat T_1(\theta)= \ti g^*(\theta)\ti g(\theta), 
\ee
where
$\ti f(\theta):=e\sqrt{\ti G(\theta)}\psi^0$ and
$\ti g(\theta)={\sqrt{\ti G(\theta)}}\na\ti\si(\cdot,\theta)$.
Hence,  (\re{hess2i})
 implies 
that
\be\la{bb2t}
\cE(\theta,Y)\!:=\!
 \langle  Y,\ti B(\theta) Y \rangle_{T^3}\!=\!b(\theta, \Psi_1,Q)+
 2\langle  \Psi_2,\ti H^0(\theta) \Psi_2 \rangle_{T^3}
+ PM^{-1} P,\quad Y\!=\!(\Psi_1, \Psi_2, Q, P)\in {\cal V}(T^3),
\ee
where  
\be\la{sq}
b(\theta, \Psi_1,Q)
:=
2\langle  \Psi_1,\ti H^0(\theta) \Psi_1 \rangle_{T^3}+
\langle 2\ti f(\theta)   \Psi_1 + \ti g(\theta) Q, 
~2\ti f(\theta)   \Psi_1+\ti g(\theta) Q \rangle_{T^3}.
\ee
Let us note that $\ti H^0(\theta)=
-\ds\fr12(\na+i\theta)^2$ by (\re{ppo}). Hence,
the eigenvalues of $\ti H^0(\theta)$ equal to $\ds\fr12|2\pi m-\theta|^2$ where 
$m\in \Z^3$. Therefore, 
$\ti H^0(\theta)$ is positive definite:
\be\la{H0d}
\langle\Psi_1, \ti H^0(\theta)\Psi_1 \rangle\ge \fr12 d^2(\theta)\Vert  \Psi_1\Vert_{H^1(T^3)}^2~,
\qquad 
\theta\in\Pi^*\setminus \Ga^*,
\ee
where $d(\theta):=\dist(\theta,\Ga^*)$.
Hence, 
it remains to prove the following proposition.

\bp\la{p}
Under conditions of Theorem \re{tpose} 
\be\la{ub2}
b(\theta,  \Psi_1,Q)
\ge \ve(\theta)[\Vert\Psi_1\Vert_{H^1(T^3)}^2+| Q|^2],\qquad 
\theta\in
\Pi^*_+,
\ee
 where $\ve(\theta)>0$. 

\ep
\Pr 
Let us denote 
 $\al:=\langle  \Psi_1,\ti H^0(\theta) \Psi_1 \rangle_{T^3}$,
 and 
\be\la{abb}
\beta_{11}:=
\langle 2\ti f(\theta)   \Psi_1,
2\ti f(\theta)   \Psi_1\rangle_{T^3},
\quad
\beta_{12}:=
 \langle 2\ti f(\theta)   \Psi_1, \ti g(\theta) Q \rangle_{T^3},
\quad
\beta_{22}:=
\langle \ti g(\theta) Q, \ti g(\theta) Q \rangle_{T^3}.
\ee
Then we can write the quadratic form (\re{sq}) as 
\be\la{sq2}
b=
2\al+\beta,\qquad
\beta:=
\beta_{11}+2\rRe\beta_{12}+\beta_{22}.
\ee
The positivity  (\re{H0d}) implies that 
\be\la{abb3}
\al\ge \de(\theta)
\beta_{11},\qquad \theta\in\Pi^*\setminus\Ga^*,
\ee
where $\de(\theta)>0$.
 Hence, 
\be\la{abb4}
b\ge\al+
(1+ \de(\theta))
\beta_{11}+2\rRe\beta_{12}+\beta_{22},\qquad \theta\in\Pi^*\setminus\Ga^*.
\ee
On the other hand, the Cauchy-Schwarz inequality implies that
\be\la{abb5}
|\beta_{12}|\le \beta_{11}^{1/2}\beta_{22}^{1/2}\le \fr12[\ga\beta_{11}+\fr1\ga\beta_{22}]
\ee
for any $\ga>0$.
Hence, 
\be\la{abb7}
b\ge\al+
(1+ \de(\theta)-\ga)
\beta_{11}+(1-\fr1{\ga})\beta_{22},\qquad \theta\in\Pi^*\setminus\Ga^*.
\ee
Therefore, 
choosing $1<\ga\le 1+ \de(\theta)$,
we obtain (\re{ub2})  from (\re{H0d}) 
since
\be\la{b22}
\beta_{22}=
Q\hat T_1(\theta)Q= \Si(\theta)|Q|^2
\ee
by (\re{fact}) and (\re{TT}).\bo

\setcounter{equation}{0}
\section{Weak solutions and linear stability}\la{sred}

Weak solutions are introduced and the linear stability is proved.

\subsection{Weak solutions }
We will consider solutions to (\re{JDi}) in the sense of distributions.
Let us recall that $A^*V\in\cX$ for $V\in\cD$ by Corollary \re{cAA*}.

\bd\la{dws}
$Y(t)\in C(\R,\cX)$ is a weak solution to (\re{JDi}) if
\be\la{ws}
-\int\langle Y(t),\dot\vp(t)V\rangle dt= \int\langle Y(t),\vp(t)A^*V \rangle dt,\qquad \vp\in C_0^\infty(\R),~~V\in\cD.
\ee

\ed
Let us translate this definition into the Fourier--Bloch transform: by  the  Parseval--Plancherel identity
\be\la{wsF}
-\int\Big[\int_{\Pi^*}\langle \ti Y(\theta, t),\dot\vp(t)\ti V(\theta)\rangle_{T^3} d\theta\Big] dt= \int\Big[\int_{\Pi^*}
\langle \ti Y(\theta,t),\vp(t)\ti A^*(\theta)\ti V(\theta) \rangle_{T^3} d\theta\Big] dt
\ee
Respectively, (\re{ws}) is equivalent to the identity
\be\la{wsF2}
\!-\!\int \langle \ti Y(\theta, t),\dot\vp(t)\ti V\rangle_{T^3} dt\!=\! \int
\langle \ti Y(\theta,t),\vp(t)\ti A^*(\theta)\ti V \rangle_{T^3}  dt,
~~ \vp\!\in\! C_0^\infty(\R),~~\ti V\!\in\! \cD(T^3),~~
\mbox{\rm a.e.}~~\theta\in  \Pi^*\setminus \Ga^*,
\ee
where $\cD(T^3):=C^\infty(T^3)\oplus C^\infty(T^3)\oplus \C^3\oplus\C^3$.
In other words,
\be\la{JD}
\dot {\ti Y}(\theta, t)= \ti A(\theta) \ti Y(\theta,t),\qquad \mbox{\rm a.e.}~~\theta\in  
\Pi^*\setminus \Ga^*
\ee
in the sense of vector-valued distributions.

\subsection{Linear stability}
The equation (\re{JD}) is equivalent to
\be\la{CPF2}
 \dot{\ti Y}(\theta,t)=J\ti B(\theta) \ti Y(\theta,t)~,\quad t\in\R,\quad~~\mbox{\rm a.e.}~~\theta\in\Pi^*\setminus \Ga^*.
\ee
We 
reduce it, 
using (\re{Hpos2}),
to an equation with 
a selfadjoint generator by our methods \ci{KK2014a,KK2014b} which is an infinite-dimensional version
of some Gohberg and Krein ideas from the theory of parametric resonance \ci[Chap. VI]{GK}.
We reproduce some details of \ci{KK2014a} for the convenience of the reader.
Namely,  let us denote 
\be\la{tLa}
\ti\Lam(\theta)=\ti B^{1/2}(\theta)>0,\qquad \theta\in\Pi^*_+.
\ee
This is a selfadjoint operator with the domain $\cV(T^3)$,
that follows by the interpolation arguments, 
and  the range $\Ran\5\ti\Lambda(\theta)=\cX(T^3)$.
Its inverse is bounded  in $\cX(T^3)$  by (\re{Hpos2}),  and
\be\la{nVet}
 \Vert \ti\Lam^{-1}(\theta)Z\Vert_{\cV(T^3)}\le \fr 1{\sqrt{ \vka(\theta)}} \Vert Z\Vert_{\cX(T^3)},
 \quad Z\in\cX(T^3),\qquad  \theta\in\Pi^*_+.
\ee
Let us  set
$\ti Z(\theta,t):=\ti\Lam(\theta)\ti Y(\theta,t)$, and now
equation (\re{CPF2}) implies that
\be\la{CPF3}
 \dot {\ti Z}(\theta,t)=-i\ti K(\theta)\ti Z(\theta,t),\quad t\in\R,\quad
  ~~\mbox{\rm a.e.}~~\theta\in\Pi^*_+
\ee
in the sense of vector-valued distributions,
where  $\ti K(\theta)=i\ti\Lam(\theta) J\ti\Lam(\theta)$.

\begin{lemma}\la{lH0} {\rm (Lemma 2.1 of \ci{KK2014a})}
$K(\theta)$ is a selfadjoint operator in $ \cX(T^3)$ with a dense domain  $D(K(\theta))\subset \cV(T^3)$
for every $\theta\in\Pi^*_+$.
\end{lemma}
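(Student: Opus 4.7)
The plan is to identify $\ti K(\theta)$ with the inverse of a bounded selfadjoint operator and then invoke the standard fact that the inverse of an injective bounded selfadjoint operator is itself selfadjoint on its (necessarily dense) range. First I would record the structural facts one needs: the positivity hypothesis (\re{Hpos2}) on $\Pi^*_+$ makes $\ti B(\theta)$ positive selfadjoint, so $\ti\Lam(\theta)=\ti B^{1/2}(\theta)$ is positive selfadjoint with form domain $\cV(T^3)$, bijective onto $\cX(T^3)$, while $\ti\Lam^{-1}(\theta)$ is bounded and selfadjoint on $\cX(T^3)$ and maps $\cX(T^3)$ continuously into $\cV(T^3)$ by (\re{nVet}); the symplectic matrix $J$ in (\re{J}) is block-diagonal with nonzero determinant on each block, so $J^{-1}$ is bounded (on both $\cX(T^3)$ and $\cV(T^3)$), and $J^{*}=-J$ gives $(J^{-1})^{*}=-J^{-1}$.

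Next I would introduce the auxiliary bounded operator
\be
T(\theta):=-i\,\ti\Lam^{-1}(\theta)\,J^{-1}\,\ti\Lam^{-1}(\theta):\cX(T^3)\to\cX(T^3),
\ee
and verify that it is selfadjoint and injective. Selfadjointness follows at once from the three identities just recorded: $T(\theta)^{*}=i\,\ti\Lam^{-1}(\theta)(-J^{-1})\ti\Lam^{-1}(\theta)=T(\theta)$. Injectivity is equally direct: $T(\theta)Y=0$ forces $\ti\Lam^{-1}(\theta)Y\in\Ker J^{-1}=\{0\}$, and then $Y=0$ by injectivity of $\ti\Lam^{-1}(\theta)$. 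Hence $\Ran T(\theta)=(\Ker T(\theta))^{\perp}$ is dense in $\cX(T^3)$, and $T(\theta)^{-1}$, defined on $\Ran T(\theta)$, is a selfadjoint (unbounded) operator in $\cX(T^3)$.

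It then remains to identify $T(\theta)^{-1}$ with the formal expression $\ti K(\theta)=i\ti\Lam(\theta)J\ti\Lam(\theta)$. For any $Y=T(\theta)Z\in\Ran T(\theta)$ the outer $\ti\Lam^{-1}(\theta)$ places $Y$ into $\cV(T^3)=D(\ti\Lam(\theta))$, and $J\ti\Lam(\theta)Y=-iJ\cdot J^{-1}\ti\Lam^{-1}(\theta)Z=-i\ti\Lam^{-1}(\theta)Z$ again lies in $\cV(T^3)$, so the expression $i\ti\Lam(\theta)J\ti\Lam(\theta)Y$ is rigorously defined; a telescoping computation then gives $i\ti\Lam(\theta)J\ti\Lam(\theta)Y=Z=T(\theta)^{-1}Y$. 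This delivers $D(\ti K(\theta))=\Ran T(\theta)\subset\cV(T^3)$, dense in $\cX(T^3)$, with $\ti K(\theta)$ selfadjoint. The main (and essentially only) nontrivial input is the bound (\re{nVet}), which rests on the energy positivity (\re{Hpos2}) and therefore on the Wiener-type hypotheses (\re{W1})--(\re{Wai}); without that bound the operator $T(\theta)$ would not even be well-defined on $\cX(T^3)$. Everything else is an abstract manipulation of selfadjoint operators along the lines of \ci{KK2014a}.
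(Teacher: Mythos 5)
Your proof is correct and takes essentially the same route as the paper: both pass to the bounded, symmetric, injective inverse operator (your $T(\theta)=-i\ti\Lam^{-1}(\theta)J^{-1}\ti\Lam^{-1}(\theta)$, the paper's $\ti R(\theta)=i\ti\Lam^{-1}(\theta)J\ti\Lam^{-1}(\theta)$) and then invoke the standard fact (Rudin, Theorem 13.11(b)) that the inverse of an injective bounded selfadjoint operator is selfadjoint with dense domain, which lies in $\cV(T^3)$ by (\re{nVet}). If anything, your use of $J^{-1}$ and the explicit identification $T(\theta)^{-1}=i\ti\Lam(\theta)J\ti\Lam(\theta)$ is slightly more careful than the paper's shortcut, since for the nonstandard $J$ of (\re{J}) one has $J^{-1}\ne -J$.
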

\Pr
The operator $\ti K(\theta)$ is injective. On the other hand, $\Ran\5\ti\Lambda(\theta)=\cX(T^3)$, and
$J:\cX(T^3)\to\cX(T^3)$ is a bounded invertible operator.
Hence, $\Ran\5\ti  K(\theta)=\cX(T^3)$. Consider the inverse operator
\begin{equation}\la{G}
 \ti R(\theta):=\ti K^{-1}(\theta)=i\ti\Lam^{-1}(\theta) J\ti\Lam^{-1}(\theta).
\end{equation}
It is selfadjoint since $D(\ti R(\theta))=\Ran\5 K(\theta)=\cX(T^3)$ and $\ti R(\theta)$ is bounded and symmetric.
Finally, $\ti R(\theta)$ is injective, and hence, $\ti K(\theta)=\ti R^{-1}(\theta)$ is a densely defined selfadjoint operator
by Theorem 13.11 (b) of \cite{Rudin}:
$$
\ti K^*(\theta)=\ti K(\theta)~, \quad D(\ti K(\theta))=\Ran\5 \ti R(\theta)\subset \
\Ran \5\ti\Lam^{-1}(\theta)\subset \cV(T^3)
$$
by (\re{nVet}).\bo
\medskip\\
This lemma implies that each weak solution to (\re{CPF3}) is given by
\be\la{ZH0}
 \ti Z(\theta,t)=e^{-i \ti K(\theta) t}\ti Z(\theta,0) \in   C_b(\R,\cX(T^3)),\qquad ~~\mbox{\rm a.e.}~~\theta\in\Pi^*_+
\ee
for $\ti Z(\theta,0)\in\cX(T^3)$.
Hence, we obtain 
the well posedness of the Cauchy problem 
for  equation (\re{CPF2}).


\bt\la{tdt}
Let  all conditions of Theorem \re{tpose} hold and $\theta\in\Pi^*_+$. Then
for every initial state $\ti Y(\theta,0)\in \cV(T^3)$ there exists a unique
weak solution $\ti Y(\theta,t)\in C_b(\R,\cV(T^3))$ to equation (\re{CPF2}), and
\be\la{econ}
 \langle\ti \Lam(\theta) \ti Y(\theta,t),\ti \Lam(\theta)\ti Y(\theta,t)\rangle_{T^3}=\const,
\qquad t\in\R.
\ee
\et
\Pr 
$\ti Z(\theta,0):=\ti\Lam(\theta)\ti Y(\theta,0)\in \cX(T^3)$
since 
$Y(\theta,0)\in \cV(T^3)$. Hence,
(\re{ZH0}) 
and (\re{nVet}) 
imply that
\be\la{YH0}
  \ti Y(\theta,t)=\ti\Lam^{-1}(\theta) e^{-i K(\theta) t} \ti Z(\theta,0) \in   C_b(\R,\cV(T^3)).
\ee
Finally, (\re{econ}) holds since  $e^{-i K(\theta) t}$  is  the unitary group 
in $\cX(T^3)$,
and hence
$$
\langle\ti \Lam(\theta)\ti  Y(\theta,t),\ti \Lam(\theta)\ti Y(\theta,t)\rangle_{T^3}=
\langle\ti Z(\theta,t),\ti Z(\theta,t)\rangle_{T^3}=\const,\qquad t\in\R.
\qquad \qquad\qquad \qquad
\bo
$$

Now we apply this theory to equation  (\re{JDi}).
Let us note that $\ti \Lam(\theta)\ti Y(\theta)\in L^2(\Pi^*_+,\cX(T^3))$
for 
 $Y\in\cD$, see Definition \re{dD}.

\bd\la{dW}
The Hilbert space $\cW$ is the completion of $\cD$ in the norm
\be\la{nW}
\Vert Y \Vert_\cW:=\Vert \ti\Lam(\theta)\ti Y(\theta) \Vert_{L^2(\Pi^*_+,\cX(T^3))}
\ee
\ed
Formally, $\Vert Y\Vert_\cW=\langle Y,BY\rangle^{1/2}$.
The Fourier-Bloch transform (\re{YPi}) extends to the isomorphism
\be\la{FBW}
\cF:\cW\to \ti\cW:=\{\ti Y(\cdot)\in L^2_{\rm loc}(\Pi^*_+,\cX(T^3)): 
\Vert \ti\Lam(\theta)\ti Y(\theta) \Vert_{L^2(\Pi^*_+,\cX(T^3))}<\infty\}.
\ee
Finally, let us extend definition of  weak solutions to  $Y(t)\in C_b(\R,\cW)$
by the identity  (\re{CPF2}) in the sense of vector-valued distributions (\re{wsF2}).
Then  Theorem  \re{tdt}  implies the following corollary.
\bc\la{cd}
Let all conditions of Theorem \re{tpose} hold. Then
for every initial state $Y(0)\in \cW$ there exists a unique
weak solution $Y(t)\in C_b(\R,\cW)$ to equation (\re{JDi}), and the energy norm is conserved{\rm :}
\be\la{econ2}
 \Vert  Y(t)\Vert_\cW=\const,\qquad t\in\R.
\ee
The solution is given by the formula (\re{YH0}):
\be\la{tZ0}
 Y(t)=\cF^{-1}\ti\Lam^{-1}(\theta) e^{-i K(\theta) t} \ti Z(\theta,0) \in   C_b(\R,\cW(T^3)).
\ee
\ec
This means that the linearized dynamics 
(\re{JDi})
is stable: global 
solutions exist for all initial states of finite energy, and the norm is constant in time.



\setcounter{equation}{0}
\section{Examples of negative energy}

We show that the positivity (\re{Hpos2}) can fail
if the condition (\re{Wai})
breaks down even when  the Wiener condition  (\re{W1}) holds.
Namely,  for $Y_0=(0,0,Q,0)\in \cV(T^3)$ we have
\be\la{Hpos02}
 \cE(\theta,Y_0)
=Q\hat T(\theta)Q
\ee
by (\re{Hpos0}).
\bl\la{lne}
There exist functions $\mu(x)$ 
such that 
the positivity (\re{Hpos2})  fails 
for $\si(x)$ from (\re{fam}) 
with small $e>0$ while 
(\re{L123i}) and the Wiener condition  (\re{W1}) hold. 
\el
\Pr
It suffices to construct an example of $\si(x)$ which provides 
\be\la{Tt0}
Q\hat T(\theta_0)Q<0
\ee
for some
$\theta_0\in\Pi^*\setminus\Ga^*$ and 
$Q\in\C^3$.
The representation (\re{K3}) can be written as 
\beqn\la{K322}
\hat T_1(\theta)
=
e^2 \sum_m
\Big[\fr{\xi\otimes\xi}{|\xi|^2}|\ti\mu(\xi)|^2\Big]_{\xi=2\pi m-\theta}~,
\quad\theta\in\Pi^*\setminus\Ga^* 
\eeqn
Similarly, (\re{K33}) can be written in the Fourier representation as 
\be \la{K332}
\hat T_2
=-e^2\fr1{(2\pi)^3}\langle \ti\nu^0_e(\xi) \fr{\xi\otimes\xi}{|\xi|^2},
\ti\mu(\xi)\rangle  
\ee
 with
$\nu^0_e:=
\mu_{\rm per}(x)-|\psi_e^0(x)|^2$  according to  (\re{L5}). 
The asymptotics (\re{p0}) of the ground state $\psi^0_e(x)$ implies
\be\la{W4}
\ti\nu^0_e(\xi)= \ti\mu_{\rm per}(\xi)-|\ga_e|^2(2\pi)^3\de(\xi)-\ti s(\xi)=
\ti\mu_{\rm per}(\xi)-Z(2\pi)^3\de(\xi)-\ti s(\xi),
\ee
since $|\ga_e|^2=Z$ by  (\re{p0}).
Here $s(x)=\ga_e\ov\chi_e(x)+\ov \ga_e\chi_e(x)+|\chi_e(x)|^2$, and
\be\la{W4e}
\Vert s\Vert_{L^2(T^3)}\le C_1 e^2
\ee
by (\re{p0}). 
Further, (\re{muj}) gives 
\be\la{W5}
\ti\mu_{\rm per}(\xi)=\sum_n\ti\mu(\xi)e^{in\xi}=
\ti\mu(\xi)(2\pi)^3\sum_m\de(\xi-2\pi m)
\ee
by the Poisson summation formula \ci{Her1}.
Substituting (\re{W5}) into  (\re{W4}) we get
\be\la{W42}
 \ti\nu^0_e(\xi)
=\ti\mu(\xi)(2\pi)^3\sum_{m\ne 0} \de(\xi-2\pi m)-\ti s(\xi)
\ee
by (\re{ro+}) and (\re{fam}). 
Substituting this expression  into  (\re{K33}) we obtain
\be\la{K34}
\hat T_2=
-e^2\langle \ti \mu(\xi) 
\sum_{m\ne 0} \de(\xi-2\pi m)
\fr{\xi\otimes\xi}{|\xi|^2},
\ti\mu(\xi)\rangle
+
\fr{e^2}{(2\pi)^3}\langle \ti s(\xi) \fr{\xi\otimes\xi}{|\xi|^2},
\ti\mu(\xi)\rangle.
\ee
At last, 
$s(x)$ is a $\Ga$-periodic function and 
$$\int_{T^3} s(x)dx=\int_{T^3} \nu^0_e(x)dx=0
$$
by (\re{LPS20e}). Hence, 
\be\la{rp}
\ti s(\xi)=\sum_{m\ne 0} \check s(m)\de(\xi-2\pi m),\qquad \sum_m|\check s(m)|^2=\cO(e^4),\qquad e\to 0
\ee
by (\re{W4e}). 
Therefore,
\be\la{K342}
\hat T_2=
-e^2\sum_{m\ne 0 } \Big[\fr{\xi\otimes\xi}{|\xi|^2}|\ti \mu(\xi)|^2\Big]_{\xi=2\pi m}+\cO(e^4),\qquad e\to 0.
\ee
Hence, there exists a $Q\in\C^3$ such that
\be\la{T2n}
Q\hat T_2Q<0
 \ee
 for small $e>0$ if the condition (\re{Wa})
breaks down.
For example, we can take $Q=2\pi m$ with $m\in \Z^3\setminus 0$ if 
$\ti\mu(2\pi m)\ne 0$.
 Finally,  
for any $\theta_0\not\in\Ga^*$ 
we can reduce $|\hat \mu(\theta)|$
in all points  $\theta\in \theta_0+\Ga^*$ 
keeping it in the points of $\Ga^*$
to have 
\be\la{T2neg}
 Q\hat T(\theta_0)Q=Q\hat T_1(\theta_0)Q+Q\hat T_2Q<0.
 \ee
At the same time, we can
keep (\re{L123i}) and the Wiener condition (\re{W1}) to hold.\bo

\br\la{rT2}
The operator $T_2$ corresponds to the last term in the last line
of  (\re{LPS1Li}). This term describes the "virtual repulsion" of the ion located at $n+q^0$  from the same ion 
deflected  to the point $n+q^0+Q(n,t)$. This means that  the negative energy contribution 
is provided by the electrostatic instability ("Earnshaw Theorem" {\rm \ci{Stratton}}).

\er


\appendix

\protect\renewcommand{\thesection}{\Alph{section}} 
\protect\renewcommand{\theequation}{\thesection.\arabic{equation}} 
\protect\renewcommand{\thesubsection}{\thesection.\arabic{subsection}} 
\protect\renewcommand{\thetheorem}{\Alph{section}.\arabic{theorem}}

\setcounter{equation}{0}
\section{Formal linearization at the ground state}
Let us substitute 
\[
 \psi(x,t)=[\psi^0(x)+\Psi(x,t)]e^{-i\om^0 t},~~~~ q(n,t)=q^0+Q(n,t)
\]
into the nonlinear equations (\re{LPS1}), (\re{LPS3}) with $\Phi(x,t)=G\rho(x,t)$. 
First, \eqref{LPS2} implies that
\[
\rho(x,t)=\sum _n\si(x-n-q^0-Q (n,t))-e|\psi^0(x)+\Psi(x,t)|^2
\]
and the Taylor expansion {\it formally} gives
\beqn\la{Ta}
\!\!\!\!  \rho(x,t)\!\!&\!\!=\!\!&\!\!\sum_n\Big[\si(x-n-q^0)-\na\si(x-n-q^0) Q(n,t)+
 \fr12\na\na\si(x-n-q^0)Q(n,t)\otimes Q(n,t)+... \Big]
 \nonumber\\\nonumber\\
\!\!\!\! \!\! &\!\!-\!\!&\!\!e\Big[|\psi^0(x)|^2+2\rRe(\psi^0(x)\ov{\Psi}(x,t))+|\Psi(x,t)|^2\Big]
 =\rho^0(x)+\rho_1(x,t)+\rho_2(x,t)+...
\eeqn
Here 
$\rho^0(x):=\si^0(x)-e|\psi^0(x)|^2$ and 
$\rho_k$ are   polynomials in
$\Psi(x,t)$ and $Q(t)$ of  degree $k$. In particular, 
$\rho_1(x,t)$ is given by  (\re{ro1i}).
As a result, we obtain the system (\re{LPS1Li}) in the linear approximation.


\section{Ground states with minimal energy per cell}

Let us consider any  ion density $\si(x)\in L^2(\R^3)$ satisfying (\re{Wai}):
 \be\la{Wa}
\ti\si(2\pi m)=0,\quad m\in\Z^3\setminus 0.
\ee
Let us note that 
\be\la{Z0}
\ti\si(0)=\int \si(x)dx=eZ>0
\ee
by (\re{ro+}).
 Then $\si_{\rm per}(x):=\sum_n\si(x-n)\equiv eZ$ since
 \be\la{PS}
 \check \si_{\rm per}(m)=\int_{T^3} e^{i2\pi m x}\si_{\rm per}(x)dx=
 \int_{R^3} e^{i2\pi m x}\si(x)dx= 0,\qquad m\in\Z^3\setminus 0
 \ee
 by (\re{Wa}). Therefore, 
 the functions
 \be\la{ppo}
\psi^0(x)\equiv \sqrt{Z},\qquad
\Phi^0(x)\equiv 0,\qquad\om^0=0
\ee
give a solution to (\re{LPS10})--(\re{LPS30}) with zero energy per cell (\re{U}).
On the other hand, the energy (\re{U}) is nonnegative.
Hence,
 the set of all minimizers of energy per cell
 consists of 
$\psi^0(x)\equiv e^{i\phi}\sqrt{Z}$, with $\phi\in [0,2\pi].$



\end{document}